\numberwithin{equation}{section} %pour numéroter les equations en incluant le numéro de section où on se trouve
\newtheorem{theorem}{Theorem}
\newtheorem{corollary}[theorem]{Corollary}
\newtheorem{definition}[theorem]{Definition}
\newtheorem{lemma}[theorem]{Lemma}
\newtheorem{proposition}[theorem]{Proposition}
\newtheorem{remark}[theorem]{Remark}
\newenvironment{proof}[1][Proof]{\noindent\textbf{#1.} }{\ \rule{0.5em}{0.5em}}
\def\f{{\cal F}}
\title{The arbitrage-free Multivariate Mixture Dynamics Model:\\ Consistent single-assets and index volatility smiles}
\author{
Damiano Brigo\thanks{Dept. of Mathematics, Imperial College, London. damiano.brigo@imperial.ac.uk} \hspace{.5cm} Francesco
Rapisarda\thanks{Method Investments \& Advisory Ltd. This paper reflects solely the Author's personal opinion and does not represent the opinions of the author's  employers, present and past, in any way. francesco\_rapisarda@ymail.com} \hspace{.5cm} Abir Sridi\thanks{Dept. of Mathematics, Universit\'{e} Paris I Panth\'{e}on-Sorbonne, Paris. abir.sridi@malix.univ-paris1.fr}}
\date{\small{First version: 1 Feb 2012. This version: 23 Sept 2014. First posted on SSRN \& arXiv on Feb 2013}}
\begin{document}
\maketitle

\vspace{-1cm}

\begin{abstract}

We introduce a multivariate diffusion model that is able to price
derivative securities featuring multiple underlying assets.
Each asset volatility smile is modeled according to a density-mixture dynamical model while the
same property holds for the multivariate process of all assets, whose density is a mixture of multivariate basic densities.
This allows to reconcile single name and index/basket volatility smiles in a consistent framework.
%Rather than simply correlating one-dimensional local volatility models for each asset, 
Our approach could be dubbed a multidimensional local volatility approach with vector-state dependent diffusion matrix.
The model is quite tractable, leading to a complete market and not requiring Fourier techniques for calibration and dependence measures, contrary to multivariate stochastic volatility models such as Wishart.
We prove existence and uniqueness of solutions for the model stochastic differential equations,
provide formulas for a number of basket options, and analyze the dependence structure of the model in detail by deriving a number of results on covariances, its copula function and rank correlation measures and volatilities-assets correlations. 
A comparison with sampling simply-correlated suitably
discretized one-dimensional mixture dynamical paths is made, both in terms of option pricing and of dependence, and first order expansion relationships between the two models' local covariances are derived. 
We also show existence of a multivariate uncertain volatility model of which our multivariate local volatilities model is a Markovian
projection, highlighting that the projected model is smoother and avoids a number of drawbacks of the uncertain volatility version. We also  show a consistency result where the Markovian projection of a geometric basket in the multivariate model is a univariate mixture dynamics model. A few numerical examples on basket and spread options pricing conclude the paper. \bigskip

{\bf Key words:} Mixture of densities, Volatility smile, Lognormal
density, Multivariate local volatility, Complete Market, Option on a weighted Arithmetic average of a basket, Spread option,
Option on a weighted geometric average of a basket, Markovian projection,
Copula function.
%\upharpoonleft \hspace{-4.pt} |

\bigskip

{\bf AMS classification codes}: 60H10, 60J60, 62H20, 91B28, 91B70

{\bf JEL classification codes}: G13.

\end{abstract}

\tableofcontents

\newpage

\section{Introduction}

It has been known for a long time that the Black--Scholes geometric
Brownian motion model \cite{black_scholes} does not price
all European options quoted on a given market in a consistent way.
In fact, this model lies on the fundamental assumption that the
asset price volatility is a constant. In reality, the implied
volatility, namely the volatility parameter that, when plugged
into the Black--Scholes formula, allows to reproduce the market
price of an option, generally shows a dependence on both the option
maturity and strike. If there were no dependence on strike one could
extend the model in a straightforward fashion by allowing a
deterministic dependence of the underlying's instantaneous
volatility on time, so that the dynamics could be represented by the
following stochastic differential equation (SDE):
\begin{equation}
dS_t = \mu S_t dt + \sigma_t S_t dW_t,
\end{equation}
$\sigma_t$ being the deterministic instantaneous volatility referred
to above. In that case, reconstruction of the time dependence of
$\sigma_t$ would follow by considering that, if $v(T_i)$ denotes the
implied volatility for options maturing at time $T_i$, then
\begin{equation}
v(T_i)^2 T_i = \int_0^{T_i} \sigma_s^2 ds.
\end{equation}

Implied volatility however does indeed show a strike dependence; in
the common jargon, this behavior is described with the term {\em
smile} whenever volatility has a minimum around the forward asset price
level, or {\em skew} when low--strike implied volatilities are
higher than high--strike ones. In the following we will loosely
speak of both effects as ''volatility smile''.

In recent years, many researches have tried to incorporate the smile
effect into a consistent theory. Several streams of investigation can be identified in a univariate setting. We do not aim at completeness in the following review, but just present a few  relevant examples.

A first approach is based on assuming an {\em alternative explicit
dynamics} for the asset--price process that by construction ensures
the existence of volatility smiles or skews. Typically, in this dynamics the diffusion coefficient
of the asset price is a deterministic function of the asset price itself and of time.
This is referred to with the term ``local volatility". Examples include the CEV
process proposed by Cox \cite{cox} and Cox and Ross \cite{cox_ross}.
A different example is the displaced diffusion model by Rubinstein \cite{rubi83}.
%A general class of problems was presented by Carr, Tari and Zariphopoulou
%\cite{Carr}.
In general the alternative explicit dynamics does not reproduce accurately
enough the market volatility structures, since it is based on quite stylized dynamics, with the
mixture dynamics exception we will see in a moment.

A second approach is based on the assumption of a {\em continuum of
traded strikes} \cite{Breeden}. This was extended yielding an
explicit expression for the Black--Scholes implied volatility as a
function of strike and maturity
\cite{Derman,DermanKani,Dupire,Dupire1}. This approach however needs
a smooth interpolation of option prices between consecutive traded
strikes and maturities. Explicit expressions for the risk--neutral
stock price dynamics were also derived by minimizing the relative
entropy to a prior distribution \cite{Avellaneda} and by assuming an
analytical function describing the volatility surface
\cite{BrownRandall}.

Another approach is {\em an incomplete market} approach, and
includes stochastic volatility models
\cite{Heston,HullWhite,Tompkins}, jump--diffusion models
\cite{Prigent} and more recently stochastic-local volatility models \cite{labordere} combining
local and stochastic volatility.

A further approach consists of finding the risk--neutral distribution
on a {\em lattice} model for the underlying that leads to a best fit
of the market option prices subject to a smoothness criterion
\cite{BrittenJones,Jackwerth}. This approach has the drawback of
being entirely numerical.

A number of the above approaches is described for the foreign exchange market in Lipton \cite{liptonbook},
see also Gatheral  \cite{gatheral} who deals further with volatility surfaces parameterization.
Recent literature also focused on both short-- and long--time asymptotics for volatility models: we just cite \cite{GaEtAl} as a reference for small time asymptotics in local volatility models, and \cite{FJ09} for large maturities asymptotics in the well known Heston stochastic volatility model, while pointing out that the volatility asymptotics literature is much broader.

In general the problem of finding a risk--neutral distribution that
consistently prices all quoted options is largely undetermined. A
possible solution is given by assuming a particular parametric
risk--neutral distribution dependent on several, possibly
time--dependent, parameters and use the latter in conjunction with a
calibration procedure to the market option prices. In a number of papers, Brigo, Mercurio, Rapisarda and Sartorelli
\cite{general_mixture_diffusion,mixture1,mixture2,mixtureFX,sartorelli} proposed a family of models that carry on
dynamics leading to a parametric risk--neutral distribution flexible enough for practical purposes.
It is relatively straightforward to postulate a mixture distribution at a given point in time, but it is less so finding a stochastic process that is consistent with such distribution and whose stochastic differential equation has a unique strong solution. This is the approach adopted by the above papers. This family of models is summarized for example in Musiela and Rutkowski \cite{musiela}, or Fengler \cite{fengler}, see also Gatheral \cite{gatheral}.
Formally, this is part of the alternative explicit dynamics
branch of models but is typically much richer than the models listed above, leading to a practically exact fit of the volatility smile while retaining analytical tractability.

\bigskip

The aim of this paper is to incorporate the effect of the
volatility smile observed on the market when pricing and hedging
{\em multiasset} securities, while retaining sensible single--asset volatility structures.
A whole lot of such structured
securities is nowadays offered to institutional and retail
investors, in the form of options on baskets of stocks/FX rates and
on combinations of forward interest rates such as e.g.
European/Bermudan swaptions. In our approach we remain within a lognormal-mixture 
local volatility model for the individual assets composing the
underlying of the option (be it a basket of stocks or a swap rate)
that has proved to be quite effective in accounting for the observed
single--assets' smiles, but we move one step beyond the na\"ive ``Brownian correlation" 
way to connect these univariate models when writing the joint multi-asset dynamics.
Indeed, given univariate local volatility (one dimensional diffusion-) models for each asset, a basic approach
is introducing instantaneous correlations across the Brownian shocks of each asset, leading to what we call the Simply
Correlated Mixture Dynamics (SCMD).
For practical implementation, one would then discretise the one-dimensional single--asset SDEs through, say, Euler or higher
order numerical schemes \cite{kloeden_platen}, feeding correlated instantaneous Brownian shocks into the scheme.
In this paper we adopt a different approach and we incorporate statistical dependence 
in a new scheme that enjoys analytic multivariate densities and a
fully analytic multivariate dynamics through a state dependent non-diagonal
diffusion matrix. In so doing we are able to sample a new manifold
of instantaneous covariance structures (and a new manifold of
dynamics) which ensures full compatibility with the individual
volatility smiles and overcomes the difficult problems created by the
lack of closed form formulas for prices and sensitivities on multi-asset securities. We call
the resulting model Multi Variate Mixture Dynamics (MVMD) and prove existence and uniqueness of the solution for its multivariate stochastic differential equation.

% SCMD vs MVMD MC vs closed
The traditional approach for pricing European--style derivatives on
a basket of the  multidimensional underlying, in a SCMD type model, uses a Monte Carlo method that can
be very slow as it involves intensive time discretization, given that correlation can only be introduced at local shocks level.
With this paper we fill this substantial gap in option pricing and provide, with MVMD type models, a semi-analytic solution to the
option pricing problem where the price can be quickly and accurately evaluated, something that practitioners value greatly, especially in the Risk Management analytics area.
%
% MVMD tractability, vs Wishart 
The level of tractability in MVMD for both single assets and indices/baskets is much higher than with multivariate stochastic volatility models such as Wishart models,
for which we refer for example to \cite{gurieroux07,fonseca07} and references therein. This tractability extends to a lot of dependence measure calculations, as we shall see shortly, which are fundamental in a multi-asset model. Furthermore,
the MVMD model leads to a complete market and hedging is much simpler. It is practically a tractable and flexible multivariate local volatility model that has the potential to consistently calibrate univariate and index volatility smiles through a rich but at the same time transparent parameterization of the dynamics.

% studying dependence 
In multi-asset models the transparency on statistical dependence structures and their dynamics is fundamental. 
This is why we study and calculate in closed form instantaneous correlations between assets, terminal correlations, average correlations, rank correlations, squared volatility - assets correlations, and the whole copula function of the MVMD model. Such explicit study and formulas are not available in SCMD or Wishart models.  We also derive an expansion of the local covariance in MVMD, showing that the first term in the expansion coincides with the analogous term in SCMD. As a form of comparison between MVMD and SCMD, we look at Kendall's tau rank correlation measures across assets in detail, as implied by the two different models when the same parameters are chosen. 

%Markovian projections MUVM

We then introduce a Multivariate Uncertain Volatility
Model (MUVM). We show that the MVMD model is a Markovian projection of the MUVM.
MUVM thus gives the same European option prices as MVMD and can be used instead of MVMD
 to price European options also in the multivariate setting.
MUVM features the same dependence structure as the MVMD model. The
related copula is a mixture of multivariate copulas that are each a
standardized multivariate normal distribution with an appropriate
correlation matrix and marginals.
Despite these similarities, the MUVM model is less smooth and convincing than the MVMD model.
The fact that the uncertaintly of volatiltiy needs to be realized instantly in a very near future is unrealistic and may lead to
problems when hedging with the model and when dealing with early exercise products, especially when exercise is considered near the date of realization of the uncertain volatility. Hence while we show the Markovian projection property as an interesting mathematical result, we recommend usage of MVMD rather than MUVM for products where the two models produce different prices.

%Correla asset vola sq

We further point out a result on correlation between assets and their instantaneous variances (squared volatilities) and covariances. A drawback of local volatility models is that they cannot decorrelate assets and volatilities, since the latter are deterministic functions of the assets themselves. However, as pointed out in \cite{general_mixture_diffusion} for the univariate case, in the MVMD model we have complete decorrelation between assets and instantaneous covariances. While this is surprising at first sight, given that all instantaneous covariances are deterministic functions of the joint assets, it becomes more intuitive when thinking about the relationship with MUVM, and is the best approximation MVMD can attain for its non-Markovian originator MUVM, where instantaneous covariations and assets Brownian shocks are fully independent. 

%Markovian projection basket

We further highlight a Markovian projection property for the basket dynamics implied by MVMD. We consider the Markovian projection of the  Geometric average basket dynamics implied by MVMD on one dimensional diffusions. We find that the multivariate mixture dynamics for the basket components induces a univariate lognormal mixture dynamics for the basket, in a consistency result that can be used to price European basket options on the geometric basket in fully closed form via a Black Scholes formula. As far as the geometric average can be considered as a good proxy for the arithmetic one \cite{Vorst}, the method could be used for standard basket options, or at the very least serve as a control variate result for the one-shot simulation needed to price an option on an arithmetic basket. In the context of geometric baskets, no other similar consistency results are known for multivariate models. 

% Option pricing and numerical part

We then introduce option pricing for basket options and spread options, deriving semi-closed form formulas or one-shot simulation schemes for MVMD against multi-step Monte Carlo simulation for SCMD with analogous parameters.   
In the final part of this work, in order to develop a feel for the performance of our approach, we test it on a few cases, including arithmetic and geometric averages (weighted) baskets and spread options. We compare the prices generated by MVMD to those obtained by the SCMD model with analogous parameters, and conclude that options prices may not reflect the difference in dependence structures between the two models even for payoffs, such as spread options, that should depend heavily on the model dependence structure. 

\bigskip

The paper is organised as follows. In Section \ref{sect:MD_model},
we present a brief review of the approach to single--asset smile
modeling that has been developed in
\cite{general_mixture_diffusion,mixture1,mixture2,mixtureFX}. In Section
\ref{sect:formulation_problem}, we provide examples of typical
securities that need a multivariate setting for proper pricing.
Section \ref{sect:extension_MD_to_multivariate_problems} considers
the extension of the single--asset model to the multivariate
framework with a thorough discussion of the implications for the
dynamics stemming from a na\"{\i}ve approach (SCMD) and from ours
(MVMD). 
%This section also provides a semi-analytic formula for
%pricing European options on a basket and spread options under MVMD. 
In Section \ref{sec:dependencest} we provide a number of results on the dependence structure in the MVMD and SCMD models. 
In Section \ref{sect:Markovian_projection}, we introduce a new model that we call
"Multivariate Uncertain Volatility Model" so that our model is a
multivariate Markovian projection of it. We also show a consistency result for the Markovian projection of the geometric basket dynamics in the MVMD model, that turns out to be a univariate mixture dynamics model. 
In Section \ref{sec:optpric} we explain how to price arithmetic, geometric and spread basket options in MVMD and how this is much easier than with SCMD, deriving the relevant formulas. 
In Section \ref{sect:numerical_results_pricing}, we illustrate
the results of pricing European option on a weighted arithmetic average of the underlying assets with positive weights,
European spread option and European option on weighted geometric
average in both MVMD and SCMD frameworks and we compare the results.
Conclusions and suggestions for future research are given in the
final section.
%detach from the manifold of

%
%
%
%
%
%

\section{The Mixture Dynamics (MD) Model}\label{sect:MD_model}

For a maturity $T>0$ denote by $P(0,T)$ the price at time $0$ of
the zero-coupon bond maturing at $T$. Let $(\Omega,\f,\mathbb{P})$ be a probability space with a filtration $({\f}_t)_{t\in[0,T]}$ that is $\mathbb{P}$-complete and satisfying the usual conditions. We assume the existence of a measure $\mathbb{Q}$ equivalent to $\mathbb{P}$ called the risk--neutral or pricing measure, ensuring arbitrage freedom in the classical setup, for example, of Harrison,  Kreps and Pliska \cite{harrison&kreps,harrison&pliska}.
At times, it will be convenient to use the $T$--forward risk-adjusted measure $\mathbb{Q}^T$ rather than $\mathbb{Q}$.

The MD model is based on the hypothesis that the dynamics of the asset underlying a given option
market takes the form
\begin{equation}
dS(t) = \mu(t) S(t) dt + \nu(t,S(t)) S(t) dW(t) \label{localVol}
\end{equation}
under $\mathbb{Q}$ with initial value
$S_0.$ Here, $\mu$ is a deterministic time function, $W$ is a standard $\mathbb{Q}$
Brownian motion and $\nu$ (the "local volatility") is a well behaved
deterministic function. In order to guarantee the existence of a
unique strong solution to the above SDE, $\nu$ is assumed to be locally Lipschitz, uniformly in $t$, and to satisfy
the linear growth condition
\begin{equation}
\nu^2(t,x) x^2 \le L (1+x^2) \hspace{.2cm} \mbox{uniformly in $t$}
\end{equation}
for a suitable positive constant $L$.

Consider $N$ purely instrumental diffusion processes $Y^i(t)$ with dynamics
\begin{equation}\label{lognormal_EDS}
d Y^i(t) = \mu (t) Y^i(t) dt + v^i(t, Y^i(t))Y^i(t) dW(t)
\end{equation}
with initial value $Y^i(0),$ marginal densities $p_t^{i}$ and with $v_i$
satisfying locally Lipschitz and linear growth conditions, where each $Y^i(0)$ is set to
$S(0)$.

\begin{remark}
The reader should not interpret the $Y^i$ as real assets. They are just instrumental processes that will be used to define mixtures of densities with desirable properties.
\end{remark}
The marginal density $p_t$ of $S(t)$ is assumed to be
representable as the superposition of the instrumental processes densities $p_t^{i}$
\cite{mixture1,mixture2,mixtureFX}:
\begin{equation}
p_t = \sum_i \lambda^i p_t^{i} \hspace{.2cm} \mbox{with}
\hspace{.2cm} \lambda^i \ge 0, \forall i \hspace{.2cm} \mbox{and}
\hspace{.2cm} \sum_i \lambda^i = 1. \label{mixtureOne}
\end{equation}

The problem of characterizing $\nu$ can then be cast in the
following form: is there a local volatility $\nu$ for Eq.
(\ref{localVol}) such that Eq. (\ref{mixtureOne}) holds? Purely
formal manipulation of the related Kolmogorov forward equation
\begin{equation}
\frac{\partial p_t}{\partial t} + \frac{\partial}{\partial x} (\mu x
p_t) - \frac{1}{2} \frac{\partial^2}{\partial x^2} (\nu^2(t,x) x^2
p_t) = 0
\end{equation}
and of analogous equations for the $p_t^{i}$'s shows that a
candidate $\nu$ is
\begin{equation}\label{nu}
\nu(t,x) = \sqrt{ \frac{\sum_{i=1}^N \lambda^i v^i(t,x)^2
p_t^{i}(x)} {\sum_{i=1}^N \lambda^i p_t^{i}(x)} }.
\end{equation}
We may now introduce the following
\begin{definition} {\bf General MD model.} The general single-asset Mixture Dynamics (MD) candidate model is the model given by equations \eqref{localVol} and \eqref{nu}. If the model equation admits a unique solution and if the related Kolmogorov forward equation admits a unique solution, then the density of the model is a mixture according to Equation
\eqref{mixtureOne}, where the $p^i$ terms are the densities of the instrumental diffusion processes \eqref{lognormal_EDS}.
\end{definition}

An important consequence of the above construction is the following

\begin{proposition} Assume that the model (\ref{localVol},\ref{nu}), with $p^i_t$ from \eqref{lognormal_EDS}, admits a unique strong solution and that the related Kolmogorov forward equation admits a unique solution. Then the
pricing of European options on $S$ is simply a linear-convex combination with weights $\lambda^i$ of the option prices under the instrumental asset dynamics (\ref{lognormal_EDS}). Similarly for the Greeks at time 0.
\end{proposition}

In other terms, let $O$ be the
value at $t = 0$ of an European option with strike $K$ and maturity
$T$. $O$ is given by $O = \sum_{i=1}^N \lambda^i O_i$; where $O_i$
is the European price associated to the hypothetical instrumental dynamics (\ref{lognormal_EDS}). The
option price $O$ can be viewed as the weighted average of the European
option prices written on the processes $Y^i$. Due to linearity of differentiation,
the same convex combination applies to
all option Greeks. As a consequence, if the basic densities
$p_t^{i}$ are chosen so that the prices $O_i$ are computed
analytically, one finds an analytically tractable model.

\bigskip

The most natural choice for the $(Y^i, v_i, p_t^{i})$ triplet is :
\begin{equation}\label{sgherlo}
\left\{
\begin{array}{l}
Y^i(0)=S(0)\\
\\
v_i(t,x) = \sigma^i(t) \\
\\
V^i(t) = \sqrt{\int_0^t \sigma^i(s)^2 ds} \\
\\
p_t^{i}(x) = \frac{1}{\sqrt{2 \pi} x V^i(t)} \exp\left[ -\frac{1}{2
V_i^2(t)} \left( \ln\left(\frac{x}{S(0)}\right) - \mu t +
\frac{1}{2} V^i(t)^2 \right)^2 \right] =: \ell^i_t(x)
\end{array}
\right.
\end{equation}
with $\sigma^i$ deterministic ({\it  lognormal mixture dynamics, LMD}).

Brigo and Mercurio \cite{mixture2} proved that, with the above choice %and by assuming that each $\sigma_i$ is continuous and bounded from below by a positive constant and that there exists an $\varepsilon > 0$ such that $\sigma_i(t) = \sigma_0 > 0,$ for each $t$ in $[0,\varepsilon]$ and $i = 1,\cdots,N $
and additional nonstringent assumptions on the $\sigma_i$, the
corresponding dynamics for $S_t$ indeed admits a unique strong
solution. A greater flexibility can also be achieved by shifting the
auxiliary processes' density by a carefully chosen deterministic
function of time (still preserving risk--neutrality). This is the
so--called {\em shifted lognormal mixture dynamics} model \cite{mixtureFX}.

\begin{theorem}\label{th:LMDE} {\bf Existence and uniqueness of solutions for the LMD model}. Assume that all the real functions $\sigma^i(t)$, defined on the real numbers $t \ge 0$, are once continuously differentiable and bounded from above and below by two positive real constants. Assume also that in a small initial time interval $t \in [0, \epsilon]$, $\epsilon >0$, the functions $\sigma^i(t)$ have an identical constant value $\sigma_0$. Then the Lognormal Mixture Dynamics model (LMD) defined by Equations (\ref{localVol},\ref{sgherlo}), namely
\[  \]
\begin{equation}\label{eq:dcmix}  d S_t = \mu(t) S_t dt + s(t,S_t) S_t dW_t, \ \ S_0, \ \  s(t,x)  =  \left(\frac{\sum_{k=1}^N
\lambda^{k} \sigma^k(t)^2
\ell^k_t(x)}{\sum_{k=1}^N \lambda^{k}
\ell^k_t(x)}\right)^{1/2},
\end{equation}
admits a unique strong solution and the Kolmogorov equation for its density admits a unique solution satisfying \eqref{mixtureOne}, which is in this case a mixture of lognormal densities, leading to option prices that are linear combinations of Black-Scholes prices.
\end{theorem}

In \cite{mixture1,mixture2} it is pointed out that the squared diffusion coefficient $s(t,x)^2$ defined in  \eqref{eq:dcmix} can be considered as a state dependent weighted (convex combination) average of the basic squared volatilities $(\sigma^k)^2$ and that if the latter are uniformly bounded so is $s$.

The above description gives a sufficient basis for presenting our
generalization of the LMD to the
multivariate setting, as before at first on the basis of pure
formal manipulations, and then with full rigor, with the specific aim
of finding a method to infer the ``implied volatility'' of a basket
of securities from  the individual components and/or an explicit
dynamics for the multi-asset system. Later in the paper, formal
proofs of the general consistency of the model and of the existence
and uniqueness of the solution to the multivariate version of Eq.
(\ref{localVol}) will be provided.

\section{Options on Baskets: Motivating multivariate models}\label{sect:formulation_problem}
A generalization of LMD to the
multivariate setting aims to be able to compute the smile effect on
the implied volatilities for exotic options depending on more than
one asset, such as a basket options. Clearly, analogous techniques apply to indices.
\subsection{Basket option}
A basket option is an option whose payoff is linked to a portfolio
or "basket" of underlying assets. We can distinguish two types of
basket option:
\begin{itemize}
\item An option of weighted arithmetic average of the
underlyings:
\begin{equation}
B_t = \sum_{k=1}^n w_k S_k{(t)}, \label{basket}
\end{equation}
where $B$ is called an ``arithmetic basket";
\item An option of weighted geometric average of the underlyings:
\begin{equation}\label{weighted_geometric_average}
B_t = \left[\prod_{k = 1}^n S_k(t)^{w_k}\right]^{\frac{1}{w_1
+\ldots+ w_n}}
\end{equation}
where $B$ is called a geometric basket. 
\end{itemize} where
$S_k$ is the $k$--th component of the basket. Typically the basket
is consisting of several stocks, indices or currencies. Less
frequently, interest rates are also possible ($S_k$ could represent
a forward rate process $F_k$ in the Libor Market Model (LMM) and
instead of (\ref{basket}) we could have a more complicated
expression representing a swap rate).

Such options have the most varied nature: from the plain European
call/put options on the value of the basket at maturity $T$, to
options somewhat more complicated, such as Asian options on the
basket, Himalaya options, rainbow options and so on.

The weights $(w_k)_k$ in (\ref{basket}) can be negative. When the
basket (\ref{basket}) contains short positions it is called spread
and the option known as a spread option is written on the difference
of underlying assets. The weights $(w_k)_{k = 1,\ldots,n}$ in
(\ref{weighted_geometric_average}) are positive.

It is instructive to view a basket option as a standard derivative
on the underlying instrument whose value at time $t$ is the basket
$B_t$ so defined. 

\subsection{European options pricing}
%An European option is an option for which the buyer (the seller) has
%the right to be paid at the maturity date $T$, an asset. To exercise
%the option, the buyer (the seller) must pay at maturity a
%prespecified price $K$, known as the strike, or the exercise price
%of the option.
Let us assume that interest rates are constant and equal to $r > 0$
. We also assume the existence and uniqueness of a risk--neutral
pricing measure $\mathbb{Q}$ that is equivalent to $\mathbb{P}$ under which discounted asset prices are
martingales, implying the absence of arbitrage ($\mathbb{Q}$ is also equal to $\mathbb{Q}^T$ as interest rates are assumed to be deterministic).

\bigskip

According to the Black--Scholes pricing paradigm
\cite{harrison&kreps,harrison&pliska}, the price $\Pi$ of an
European option at initial time $t = 0$ is given by the risk-neutral
expectation:
\begin{equation}\label{European_option_price}
\Pi = e^{- r T} \mathbb{E}\left \{\ [\omega\ (B_T-K)]^+\right\}
\end{equation}
where the exponential factor takes care of the discounting and
$\omega = ± 1$ for a call/put respectively. $B_T$ is the underlying
instrument (can represent the value of the basket) at maturity $T$,
$K$ is the strike.

\bigskip

The fundamental difficulty in pricing basket options on a weighted arithmetic average of a basket is to determine
the distribution of the sum of underlying asset prices.
Let us consider the basket of securities of Eq. (\ref{basket}). Several approximation methods have been proposed for options on it when each $S_k$ follows a geometric Brownian motion. Usually the basket value (\ref{basket}) is approximated by the lognormal distribution. Recall
that here we consider baskets with possibly negative weights, such
as spreads. Hence, we cannot approximate the distribution of $B_t$
by a lognormal distribution, since such a basket can have negative
values or negative skewness. However, Brigo and Masetti \cite{brigo_masetti} in a LIBOR market model setting and later Borovkova, Permana and Weide \cite{borovkova_permana_weide} show that a more general three-parameter
family of lognormal distributions: shifted, negative and negative
shifted lognormal, can be used to approximate the distribution of a
general basket. The shifted lognormal distribution is obtained by
shifting the regular lognormal density by a fixed amount along the
$x$-axis, and the negative lognormal - by reflecting the lognormal
density across the $y$-axis. The negative shifted lognormal
distribution is the combination of the negative and the shifted one.
Note that this family of distributions is flexible enough to
incorporate negative values and negative skewness: something that
the regular lognormal distribution is unable to do. However, by using these
approximations we do not take into account the internal composition
of the basket value in terms of underlying assets having each its own dynamics.
This approach structurally cannot take into account any smile effect
on the individual underlyings' volatility, and therefore on the "basket
volatility".

\bigskip

In the following we will tackle the problem in a rigorous way,
through the generalization of the dynamical model of Eqs. (\ref{localVol},\ref{sgherlo}) that has proven to
perform quite well on some markets
\cite{mixture1,mixture2,mixtureFX} and that is under extension to
the equity markets case.

\section{Multivariate extensions of the MD model}\label{sect:extension_MD_to_multivariate_problems}

To fix ideas, suppose we are faced with the following problem: we
want to price an option maturing at $T$ on the basket of $n$ securities
given by Eq. (\ref{basket}) or Eq. (\ref{weighted_geometric_average}).
Each of these $n$ securities will have a ''smiley'' volatility
structure, and we expect the basket to show a smile
in its implied volatility, too.

\bigskip

Through Eqs. (\ref{localVol}--\ref{sgherlo}) we now have a piece of
machinery that allows us to calibrate an LMD to each  implied
volatility smile structure of the individual component $S_k$ of the
basket. Suppose we have already calibrated the individual LMDs to
such smile surfaces, thus finding the LMD local volatilities governing
the dynamics of each $S_k$.
We denote by $Y_k^1,\ldots,Y_k^N$ the instrumental processes for asset $S_k$.
Namely, for each asset  in the basket we have a family of instrumental processes
like (\ref{lognormal_EDS}) that refer to that specific asset mixture distribution, each (\ref{lognormal_EDS}) being specialized according to
Equation \eqref{sgherlo}.
To guide the reader through notation, we recall as a simple convention that for us upper indices in general denote a component in the mixture,
whereas lower indices denote different assets. So for example $\sigma_k^h$ will refer to asset $S_k$ and to the $h$-th component of the mixture, whereas the density of $Y_i^k$ at time $t$ will be denoted by $\ell_{i,t}^k$.

We are now interested in connecting these univariate LMD models $S_1,\ldots,S_n$ into a multivariate model
that embeds statistical dependence among the different asset. The most immediate
way to do this is to introduce a non-zero quadratic covariation between the Brownian motions driving the LMD models for $S_i$ and $S_j$ respectively.
%We call $\rho_{i,j}$ this quadratic covariation. The resulting multivariate dynamics is equivalent to the following $n$-dimensional diffusion process,
%where we keep the $W$'s independent and embed the quadratic covariation in the diffusion matrix.

\subsection{Simply Correlated Mixture Dynamics model}

\begin{definition} {\bf SCMD Model}.
We define the Simply Correlated multivariate Mixture Dynamics (SCMD) model
for $\underline{S}= [S_1,\ldots,S_n]$  as a vector of univariate LMD models, each satisfying Theorem
\ref{th:LMDE} with diffusion coefficients $s_1,\ldots,s_n$ given by Formula  \eqref{eq:dcmix} and densities $\ell_1,\ldots,\ell_n$ applied to each asset, and connected simply through quadratic covariation $\rho_{j,j}$ between the Brownian motions driving assets $i$ and $j$.
This is equivalent to the following $n$-dimensional diffusion process where we keep the $W$'s independent and where we embedded Brownian covariation into the diffusion matrix $\tilde{C}$, whose $i$-th row we denote by $\widetilde{C}_i$:
\begin{equation}\label{edsSCMD1}
d\underline{S}(t) = diag(\underline{\mu}) \underline{S}(t)dt +
diag(\underline{S}(t))
\widetilde{{C}}(t,\underline{S}(t))d\underline{W}(t), \ \ \  \tilde{a}_{i,j}(t,\underline{S}) := \widetilde{{C}}_{i} \widetilde{{C}}_{j}^T
\end{equation}
\begin{equation}
\tilde{a}_{i,j}(t,\underline{S}) = s_i(t,S_i) s_j(t,S_j)  \rho_{ij}= \left(\frac{\sum_{k=1}^N
\lambda_i^{k} \sigma_i^k(t)^2
\ell_{i,t}^k(S_i)}{\sum_{k=1}^N \lambda_{i}^{k}
\ell_{i,t}^k(S_i)} \ \  \frac{\sum_{k=1}^N \lambda_j^k
\sigma_{j}^k(t)^2 \ell_{j,t}^k(S_j)}{\sum_{k=1}^N
\lambda_{j}^k \ell_{j,t}^k(S_j)}\right)^{1/2} \rho_{ij}. \label{factorVol1}
\end{equation}
where $T$ represents the transposition operator.
\end{definition}

\noindent{\bf Assumption.} Throughout the paper we assume $\rho$ to be positive definite.

\begin{remark}\label{rem:scmdnomix} {\bf SCMD: no multivariate mixture}. It is important to point out  in SCMD that while
single--assets probability densities are mixtures by construction, the multivariate density is not a mixture of multivariate basic densities. The mixture property does not extend from the mono-dimensional dynamics to the multidimensional one.
\end{remark}
The practical use of the SCMD model is related to the following consideration.
Most often, one realistic way to price a plain
European option depending on more than one asset, especially in large dimension, is to use a Monte
Carlo simulation that samples suitably discretized paths according
to the drift rate of each component (risk--free minus dividend
yield) and to the diffusion matrix given by the local volatility
function in the mixture of densities model. Therefore,
assuming to have an exogenously computed structure of instantaneous
correlations $\rho_{ij}$ (computed e.g. through historical analysis
or implied by market instruments and supposed constant over time)
among the assets' returns, we could apply a na\"{\i}ve Euler Monte
Carlo scheme and simulate the joint evolution of the assets through
a suitably discretized time grid $\tau_1=0 \cdots \tau_N=T$ with a
covariance matrix whose $(i,j)$ component over the $(\tau_l,
\tau_{l+1})$ propagation interval is given by \eqref{factorVol1} computed at $t=\tau_l$.
It is immediate by construction that the SCMD approach is consistent with both the individual dynamics induced
by a LMD model for each underlying asset and with the imposed "instantaneous
correlation" (Brownian quadratic covariation) structure $\rho_{ij}$.

However, besides the practical
possibility of controlling the instantaneous correlation, and that
the number of base univariate densities to mix does not increase
with the number of underlying assets, one must be aware of the SCMD main
limitations, especially the following one.
%
%\begin{remark}
For European type basket options we do not really need the full dynamics when it comes to actually computing the price, even with several maturities in the picture. Indeed, for each maturity $T$ the
payout depends only upon the values of the assets at
time $T$, i.e., upon the values $S_k(T)$, $\forall k$, regardless of
the history of prices. So in order to compute the risk--neutral
expectation in (\ref{European_option_price}) giving the price $\Pi$,
the only information we need is the joint density of the process
$(S_1(T), S_2(T), \ldots, S_n(T))$ of random variables under that particular
risk--neutral measure. This density is usually called the state price
density. In SCMD we do not know this density, so we have to generate samples
from the entire path $B_t$ for $0 \leq t \leq T$ . The
discretization time steps $\tau_{l+1} - \tau_l$ should be chosen
carefully to be sure that the numerical scheme used to generate the
discrete samples produces reasonable approximations. Notice that when
the maturity $T$ increases, more time steps are needed. This is particularly relevant in calibrating the model for risk management applications, for example, where the inverse problem can become daunting if the dimension is large and the discretization step small.

\subsection{The Multivariate Mixture Dynamics approach}

One could try to do something different and approach
the problem so that, under suitable assumptions, the individual LMD
models (one for each underlying asset, separately calibrated each on its
volatility surface) could be merged so as to provide a coherent
multi-asset model that allows for a degree of (semi)analytic
tractability comparable to the one typical of the univariate case. This will lead to a model where the mixture property is lifted to the multivariate density, contrary to the SCMD case
(Remark \ref{rem:scmdnomix} above).

%
%
%
%
%\subsection{The multidimensional Kolmogorov equation}
Consider an $n$--dimensional stochastic process ${\underline{S}}(t)
= \left[S_1(t),\cdots,S_n(t)\right]^T$ whose generic $i-$ th
component follows the SDE
\begin{equation}
d S_i(t) = \mu_i S_i(t) dt + S_i(t) {C}_i(t,\underline{S})
d\underline{W}(t)
\end{equation}
where $\mu_i$ is a constant, $\underline{W} =
\left[W_1,\cdots,W_n\right]^T$ is a standard $n$--dimensional
Brownian motion and
${C}_i(t,\underline{S})$ is a row vector whose components are
deterministic functions of time and of the state of the process
$\underline{S}$.

Denote $a_{ij}(t,{\underline{S}}) = {C}_i(t,\underline{S})\
{C}_j^T(t,\underline{S})$. The associated Kolmogorov forward PDE to be
satisfied by the probability density $p_{\underline{S}(t)}$ of the
stochastic process $\underline{S}$ is
\begin{equation}
\frac{\partial p_{\underline{S}(t)}}{\partial t} + \sum_{i=1}^n
\frac{\partial}{\partial x_i}[\mu_i x_i p_{\underline{S}(t)}] -
\frac{1}{2} \sum_{i,j=1}^n \frac{\partial^2}{\partial x_i \partial
x_j}[ a_{ij} x_i x_j p_{\underline{S}(t)}]=0 \label{kolmogorov}
\end{equation}
where all functions are evaluated at $(t,\underline{x})$ for all $t
\geq 0, \underline{x} \in \mathbb{R}^n.$

With this notation $\underline{S}$ is given by the SDE
\begin{equation}\label{edsMVMD}
d\underline{S}(t) = diag(\underline{\mu}) \underline{S}(t)dt +
diag(\underline{S}(t))
{C}(t,\underline{S}(t))d\underline{W}(t)
\end{equation}
where ${C}$ is the $n \times n$ matrix whose $i$ th row is
${C}_i$.

${C}$ must be chosen so as to grant a unique strong solution
to the SDE (\ref{edsMVMD}). In particular, ${C}$ is assumed
to lead to a locally Lipschitz $a(t,\underline{x})$ and to satisfy, for a suitable positive constant $K$, the generalized
linear growth conditions
\begin{equation}
\mbox{trace} (a(t,\underline{x})) \|{\underline{x}}\|^2 \le K
(1+\|{\underline{x}}\|^2).
\end{equation}
The symbol $\| \|$ denotes here vector and matrix norms.

\bigskip

Consider an $n$ dimensional stochastic process
$\underline{X}^{(k)}$ whose generic $i$ th component follows the
dynamics
\begin{equation}\label{eq:instruX}
d X_i^{(k)}(t) = \mu_i X_i^{(k)}(t) dt + X_i^{(k)}(t)
\overline{\sigma}_i^{(k)}(t,\underline{X}^{(k)}) d\underline{W}(t)
\end{equation}
with $\overline{\sigma}_i^{(k)}(t,\underline{X}^{(k)})$ an $1\times n$
matrix satisfying particular conditions ensuring that the resulting
SDE giving the dynamic of $\underline{X}^{(k)}$ has a unique strong
solution.

Denote $a_{ij}^{(k)}(t,{\underline{X}^{(k)}}) =
\overline{\sigma}_i^{(k)}(t,\underline{X}^{(k)})\
\overline{\sigma}_j^{(k)}(t,\underline{X}^{(k)})^T$ and $p_t^{(k)}$ the
probability density function of $\underline{X}^{(k)}.$ The
associated Kolmogorov equation to be satisfied by $p_t^{(k)}$ is
\begin{equation}
\frac{\partial p_t^{(k)} (\underline{x})}{\partial t} + \sum_{i=1}^n
\frac{\partial}{\partial x_i}[\mu_i x_i p_t^{(k)} (\underline{x})] -
\frac{1}{2} \sum_{i,j=1}^n \frac{\partial^2}{\partial x_i \partial
x_j}[ a_{ij}^{(k)}(t,{\underline{x}}) x_i x_j p_t^{(k)}
(\underline{x})]=0. \label{kolmogorov_k}
\end{equation}

\bigskip

Inspired by the univariate approach which gave rise to the LMD model,
let us postulate that the density at any time $t$ of the
multivariate process $\underline{S}$ be equal to a weighted average
of the $p_t^{(k)}$

\begin{equation}
p_{\underline{S}(t)}({\underline{x}})=\sum_{k=1}^N \lambda^k
p_t^{(k)}({\underline{x}}), \hspace{2.cm} \lambda^k \ge 0 \ \forall
k, \hspace{2.cm} \sum_{k=1}^N \lambda^k=1 \label{mixture}
\end{equation}

The condition that $p_{\underline{S}(t)}$ satisfy
Eq.(\ref{kolmogorov}) and that each $p_t^{(k)}$ satisfy the equation
(\ref{kolmogorov_k}) leads through standard algebra to the PDE
\begin{equation}\label{eq:pdemixC}
\frac{1}{2} \sum_{i,j=1}^n \frac{\partial^2}{\partial x_i \partial
x_j} \left[ \left( a_{ij}(t,{\underline{x}}) \sum_{k=1}^N \lambda^k
p_t^{(k)}(\underline{x}) - \sum_{k=1}^N \lambda^k
a_{ij}^{(k)}(t,{\underline{x}}) p_t^{(k)}(\underline{x})
\right) x_i x_j \right] =0.
\end{equation}

\begin{proposition}
The unique candidate solution of the PDE \eqref{eq:pdemixC} is
\begin{equation}
a_{ij}(t,{\underline{x}}) = \frac{\sum_{k=1}^N \lambda^k
a_{ij}^{(k)}(t,{\underline{x}})
p_t^{(k)}(\underline{x})}{\sum_{k=1}^n \lambda^k
p_t^{(k)}(\underline{x})}, \ \ \  a_{ij}^{(k)}(t,{\underline{x}}) =
\overline{\sigma}_i^{(k)}(t,\underline{x})\
\overline{\sigma}_j^{(k)}(t,\underline{x})^T. \label{choice}
\end{equation}
\end{proposition}
\begin{proof}
It can be easily proven that the most general
solution of the equation $\sum_{ij} \frac{\partial^2}{\partial x_i
\partial x_j} f_{ij}({\underline{x}})=0$ has a Fourier transform
satisfying $({\bf q},f({\bf q}) {\bf q})=0$. The only matrix
function $f({\bf q})$ satisfying it and infinitely differentiable
with respect to ${\bf q}$ is constant. This constant must be zero in
order to have finite first and second moments of the multivariate
density $p_{\underline{S}(t)}$.
\end{proof}

This leads to the following definition.
\begin{definition} The general Multivariate Mixture Dynamics (MVMD) candidate Model for the vector of asset prices  $\underline{S}$ is defined as given by Equations
\eqref{edsMVMD} and \eqref{choice}. If a unique solution for the model equations exists and it admits a multivariate probability density, this is a mixture of basic multivariate densities according to Eq
\eqref{mixture}, where each $p^k$ is a multivariate basic density associated with an instrumental multivariate diffusion process \eqref{eq:instruX}.
\end{definition}
\begin{remark} {\bf MVMD: multivariate mixture}.
MVMD has been designed so as to have a mixture law for the multivariate model, contrary to SCMD, see Remark \ref{rem:scmdnomix} above.
\end{remark}

Of course to show that this is indeed a model we need to prove that the equation has a unique solution. We thus specialize our framework to a fully tractable case.

\subsection{The lognormal case and the univariate - multivariate MD connection}

We now specialize our framework by assuming that  the volatility
coefficient matrix for the $k$--th "base" density $p_t^{(k)}$ of Eq.
(\ref{kolmogorov_k}) is a deterministic function of time,
independent of the state, and of the particular form
$a_{ij}^{(k)}(t,\underline{x})= \overline{\sigma}_i^{(k)}(t)\
\overline{\sigma}_j^{(k)}(t)^T$.

Under this hypothesis we already know the dynamics corresponding to
Eq. (\ref{kolmogorov_k}), since we are dealing with multivariate geometric Brownian motions for \eqref{eq:instruX},
and we can explicitly write their densities
$p_t^{(k)}$
\begin{equation}
p_t^{(k)}(\underline{x}) = \frac{1}{(2 \pi)^{\frac{n}{2}} \sqrt{\det
\Xi^{(k)}(t)} \Pi_{i=1}^n x_i} \exp\left[ -\frac{ \tilde{x}^T
(\Xi^{(k)}(t))^{-1} \tilde{x}}{2} \right],
\label{multivariatelognormal}
\end{equation}
where $\Xi^{(k)}(t)$ is the $n \times n$ integrated covariance
matrix of returns for the many components of the process
${\underline{X}^{(k)}}$:
\begin{equation}
\Xi_{ij}^{(k)}(t) = \int_0^t \overline{\sigma}_i^{(k)}(s) \overline{\sigma}_j^{(k)}(s)^T ds
\end{equation}
($\Xi^{(k)}$ is assumed to be invertible at all times and
instantaneous correlation is included into the vector components) and
\begin{equation}
\tilde{x}_i^{(k)}=\ln x_i - \ln x_i(0) - \mu_i t + \int_0^t
\frac{\overline{\sigma}_i^{(k)^2}(s)}{2} ds. \label{argGauss}
\end{equation}

Calculations are simpler under the further
assumption that instantaneous correlation is constant in time, namely
\[ \overline{\sigma}_i^{(k)}(t) \overline{\sigma}_j^{(k)}(t)^T = \|\overline{\sigma}_i^{(k)}(t)\| \|\overline{\sigma}_j^{(k)}(t)\|\rho_{ij} =: \sigma_i^k(t)  \sigma_j^k(t)  \rho_{i,j},\]
or in other terms, assuming that 
\begin{equation}\label{eq:choleskysigk} \rho = B B^T, \ \  \overline{\sigma}_i^{(k)}(t) := \sigma_i^{(k)}(t) B_i 
\end{equation}
via diagonalization or Cholesky decomposition and  for positive and regular scalar time functions $\sigma_i^{(k)}(t)$, 
where $B_i$ is the $i$-th row of $B$.
The fact that the densities will get mixed up through
Eq. (\ref{mixture}) will have important consequences on the actual
structure of correlations, both instantaneous and average. But
first, let us prove that under a further assumption we can be fully
consistent with the dynamics specified by the LMD model for the
individual assets.

Let's assume that we have calibrated an LMD model for each $S_i(t)$:
if $p_{S_i(t)}$ is the density of $S_i$, we write
\begin{equation}
p_{S_i(t)}(x) = \sum_{k=1}^{N_i} \lambda_{i}^{k} \ell_{i,t}^k(x),
\hspace{.2cm} \mbox{with} \hspace{.2cm} \lambda_{i}^k \ge 0, \forall
k \hspace{.2cm} \mbox{and} \hspace{.2cm} \sum_k \lambda_{i}^k = 1
\label{individualDensity}
\end{equation}
where $Y_i^{1},...,Y_i^{N_i}$ are instrumental processes for $S_i$
evolving lognormally according to the stochastic differential
equation:
\begin{equation}\label{edsSi_k}
dY_i^{k}(t) = \mu_i Y_i^{k}(t) dt + \sigma_i^{k}(t) Y_i^{k}(t)
dZ_i(t),\ \ \ d \langle Z_i, Z_j\rangle_t = \rho_{ij} dt
\end{equation}
with density $\ell_{i,t}^k$.

For notational simplicity we will assume that the number of base
densities $N_i$ will be the same, $N$, for all assets. The exogenous
correlation structure $\rho_{ij}$ is given by the symmetric,
positive--definite matrix $\rho$.

The most natural tentative choice for the base densities of Eq.
(\ref{mixture}) is
\begin{equation}
p_{\underline{S}(t)}(\underline{x})=
\sum_{k_1,k_2,\cdots k_n=1}^N
\lambda_1^{k_1} \cdots \lambda_n^{k_n}
\ell_{1,\ldots,n; t}^{k_1,\ldots,k_n}(\underline{x}), \ \ \ell_{1,\ldots,n;t}^{k_1,\ldots,k_n}(\underline{x}) := p_{\left[Y_1^{k_1}(t),...,Y_n^{k_n}(t)\right]^T}(\underline{x}),
\label{mixture_k}
\end{equation}
or more explicitly
\begin{eqnarray}
\ell_{1,\ldots,n; t}^{k_1,\ldots,k_n}(\underline{x}) =
\frac{1}{(2 \pi)^{\frac{n}{2}} \sqrt{\det \Xi^{(k_1 \cdots k_n)}(t)}
\Pi_{i=1}^n x_i} & & \exp\left[ -\frac{ \tilde{x}^{(k_1 \cdots k_n)
T} \Xi^{(k_1 \cdots k_n)}(t)^{-1} \tilde{x}^{(k_1 \cdots k_n)} }{2}
\right]. \nonumber\label{multivariateGaussian}
\end{eqnarray}

Here, $\Xi^{(k_1 \cdots k_n)}(t)$ is the integrated covariance
matrix whose $(i,j)$ element is
\begin{equation}
\Xi_{ij}^{(k_1 \cdots k_n)}(t) = \int_0^t \sigma_i^{k_i}(s)
\sigma_j^{k_j}(s) \rho_{ij} ds \label{covarianceMatrix}
\end{equation}
and, generalizing Eq. (\ref{argGauss})
\begin{equation}\label{xtild_dimn}
\tilde{x}_i^{(k_1 \cdots k_n)}=\ln
 x_i - \ln x_i(0) - \mu_i t + \int_0^t \frac{\sigma_i^{k_i^2}(s)}{2} ds.
\end{equation}

\bigskip

Then, specializing (\ref{choice}), we have the following

\begin{definition}
The multivariate extension of the LMD model that we call Lognormal Multi Variate Mixture Dynamics (LMVMD) model is given by Eqs. (\ref{edsMVMD}) and (\ref{choice})   under specification (\ref{eq:choleskysigk}), leading to
\begin{eqnarray}\label{edsMVMDwithindBM1}
d\underline{S}(t) &=&  diag(\underline{\mu})\ \underline{S}(t)\ dt +
diag(\underline{S}(t))\
{C}(t,\underline{S}(t)) {B}  \ d\underline{W}(t), \\  \nonumber
 C_i(t,\underline{x}) &:=&\frac{\sum_{k_1,...,k_n=1}^{N} \lambda_1^{k_1}...\lambda_n^{k_n}\ \sigma_i^{k_i}(t) B_i \
\ell_{1,\ldots,n;t}^{k_1,\ldots,k_n}(\underline{x})}{\sum_{k_1,...,k_n=1}^{N}
\lambda_1^{k_1}...\lambda_n^{k_n}\
\ell_{1,\ldots,n;t}^{k_1,\ldots,k_n}(\underline{x})}
\end{eqnarray}
and therefore, defining consistently with earlier notation $a = C B (C B)^T$, 
\begin{equation}\label{C}
a(t,\underline{x}) 
= \frac{\sum_{k_1,...,k_n=1}^{N} \lambda_1^{k_1}...\lambda_n^{k_n}\
{V}^{k_1,...,k_n}(t)\
\ell_{1,\ldots,n;t}^{k_1,\ldots,k_n}(\underline{x})}{\sum_{k_1,...,k_n=1}^{N}
\lambda_1^{k_1}...\lambda_n^{k_n}\
\ell_{1,\ldots,n;t}^{k_1,\ldots,k_n}(\underline{x})}
\end{equation}
where
\begin{equation}\label{V}
{V}^{k_1,...,k_n}(t) = 
%(\sigma_i^{k_i}(t) B_i) \  (\sigma_j^{k_j}(t) B_j)^T = 
 \left[\sigma_i^{k_i}(t)\ \rho_{i,j}\
\sigma_j^{k_j}(t)\right]_{i,j = 1,...,n}.
\end{equation}

To avoid lengthy acronyms and with a slight abuse of notation, we will refer to the LMVMD model simply as MVMD, assuming implicitly from now on that we are dealing with the lognormal case.
\end{definition}

Putting notational complexity aside, what we ultimately did is to
mix in all possible ways the component densities for the individual
assets, still ensuring consistency with the starting models for the
components assets, and imposing the instantaneous correlation
structure $\rho$ at the level of the constituent densities.

To confirm that with MVMD we have a full model and not just a candidate model,
we need the following theorem.

\begin{theorem}
Under the assumption that the volatilities $\sigma_i^{k_i}(t)$ for all $i$ are once continuously differentiable and uniformly bounded from below and above by two positive real numbers  $\tilde{\sigma}$ and $\hat{\sigma}$ respectively, and that they take a common constant value $\sigma_0$ for $t \in [0,\epsilon]$ for a small positive real number $\epsilon$,  namely
\begin{eqnarray*}
&& \tilde{\sigma}=\inf_{t \ge 0} \left( \   \   \    \min_{i=1 \cdots n, k_i=1, \cdots
N} \ \ (
\sigma_i^{k_i}(t) ) \right) \\
&& \hat{\sigma}=\sup_{t \ge 0} \left( \ \ \  \max_{i=1 \cdots n, k_i=1 \cdots N} \ \ (
\sigma_i^{k_i}(t) ) \right) \\
&& \sigma_i^{k_i}(t)  = \sigma_0 > 0 \ \ \mbox{for all} \ \ t \in [0, \epsilon],
\end{eqnarray*}
and assuming the matrix $\rho$ to be positive definite, the MVMD n-dimensional stochastic differential equation (\ref{edsMVMDwithindBM1})
admits a unique strong solution. The diffusion matrix $a(t,\underline{x})$ in (\ref{C}) is positive definite for all $t$ and $x$.
\end{theorem}
\begin{proof}
Existence and uniqueness of a strong solution follows analogously to the univariate case \cite{mixture1}. Indeed, from (\ref{C}) we see that $a$ is a weighted average of $V$'s in (\ref{V}), with positive (and state-dependent) weights. Since we are assuming $\sigma$'s to be uniformly bounded and $\rho$'s to be positive definite, all $V$ matrices are positive definite and such is $a$. Moreover, all $a$'s entries are immediately seen to be bounded above and and below, the diagonal terms being bounded below by positive quantities.  Standard algebra yields
\begin{eqnarray*}
&& n \tilde{\sigma}^2 \le \|{a }\|^2=\sum_{i,j=1}^n
a_{ij}(t,\underline{S})^2 \le
n^2 \hat{\sigma}^2,
\end{eqnarray*}
so that we have a uniformly bounded continuously-differentiable function (hence locallly Lipschitz), and the usual linear growth condition holds. The common value $\sigma_0$ in an initial transient interval is needed to simplify the analysis of the component densities limit for $t \downarrow 0$ when the initial conditions for the single--asset densities are taken as Dirac delta functions.
\end{proof}

We now check that MVMD is indeed consistent with the mixture of densities models LMD
through which we have specified the dynamics of the single
components of $\underline{S}$ in the beginning.
\begin{proposition}
%\mbox{}\newline {\bf Proposition.}
%{\em
For any smooth test function $f : \mathbb{R} \longrightarrow \mathbb{R}$
and any $t \ge 0$,
%under the assumptions of Eqs.
%(\ref{mixture_k})--(\ref{xtild_dimn}),
the expectation of $f(S_i(t))$ is the same under the SCMD model   \eqref{edsSCMD1}, \eqref{factorVol1} and the MVMD model
\eqref{edsMVMD}, \eqref{C}.
%
%
%
%\begin{equation}
%\mathbb{E}_0\{f(S_i(t))\} = \int dx_i f(x_i) p_{S_i(t)}(x_i)
%\end{equation}
%with $p_{S_i(t)}$ given by Eq. (\ref{individualDensity}).
\end{proposition}

%\newline
\begin{proof}
The proof is trivial: let us start from the MVMD model. It is enough to compute the multiple integral
\begin{equation}
\begin{array}{ll}
\mathbb{E}_0\{f(S_i(t))\} & = \int dx_1 \cdots \int dx_i
\cdots \int dx_n f(x_i) p_{\underline{S}(t)}(\underline{x}) \\
&\\
= \sum_{k_1,k_2,\cdots k_n=1}^N &\lambda_1^{k_1}\cdots
\lambda_n^{k_n} \int dx_1\cdots\int dx_i \cdots \int dx_n f(x_i)
\ell_{1,\ldots,n;t}^{k_1,\ldots,k_n}(\underline{x})
\end{array}
\end{equation}

Integrating out all variables but $x_i$ in each of the integrals in
the right hand side we have
\begin{equation}
\begin{array}{ll}
\mathbb{E}_0\{f(S_i(t)) \} & = \sum_{k_1,k_2,\cdots k_n=1}^N
\lambda_1^{k_1} \cdots \lambda_n^{k_n} \int dx_i
f(x_i) \ell_{i,t}^{k_i}(x_i)\\
&\\
& = \sum_{k_i=1}^N \lambda_i^{k_i} \int dx_i f(x_i)
\ell_{i,t}^{k_i}(x_i) = \int dx_i f(x_i) p_{S_i(t)}(x_i)
\end{array}
\end{equation}
where the last integral is the same under SCMD, since by the condition that probability integrate up to one we know
that $\sum_{k=1}^N \lambda_i^k = 1$ for all $i$.
\end{proof}

\subsection{Dimensionality issues}

The computational scheme shown above ensures full consistency
between the single--asset and the multi--asset formulations of the
mixture of lognormal densities' model. It must be borne in mind,
however, that the number of ``base'' multivariate densities of the
formulation of Eqs. (\ref{mixture_k})--(\ref{xtild_dimn}) explodes
as $N^n$ if we have $N$ base univariate densities for each of the
$n$ underlying assets (more generally, if asset $i$ relies on a
single--asset mixture theory based on $N_i$ densities, the number of
multivariate densities entering the superposition amounts to
$\prod_{i=1}^n N_i$). This ``combinatorial explosion'' seems to
limit the applicability of the theory to baskets made of very few
assets.

However, as already observed elsewhere \cite{mixtureFX} two
empirical facts appear in the univariate mixture of densities model,
that encourage the application of the model to real world
multivariate settings. They are briefly summed up here:
\begin{itemize}
\item the number of base densities $N$ needed to reproduce
accurately enough the implied volatility surface for a single asset
is typically 2 to 3;
\item there appears to be a clear hierarchy between densities
composing the mixture, dictated by the weights $\lambda_k$ borne by
each density in the superposition (\ref{mixtureOne}): in fact, typically one density
takes up most of the weight, the second takes up most of the
remaining weight (remember that $\sum_{k=1}^{N} \lambda_{k}=1$) and
the last weighs little compared to the first two.
\end{itemize}

The consequences of the first issue are evident: the base in the
power law $N^n$ is of the order of two/three. This is not enough to
completely solve the explosion problem: taking $N=3$ and $n=8$ still
implies that in order to compute the price of an European option on
the basket, we should compute 6561 multidimensional integrals.

However, the second point ensures that most of the multivariate
coefficients $\lambda_1^{k_1} \cdots \lambda_n^{k_n}$ result from
the product of the smallest $\lambda$, thus rendering the
corresponding terms in the expansion of Eq. (\ref{mixture_k})
negligible. Given any $0\le\kappa\le1$, a possible solution can
therefore be to approximate Eq. (\ref{mixture_k}) through
\begin{eqnarray}
p_{\underline{S}(t)}(\underline{x}) \simeq
p_{\underline{S}(t)}(\underline{x},\kappa) = \sum_{ 
(k_1, \ldots, k_n) \in {\cal I}(\lambda)} \left(
\prod_{j=1}^n \tilde{\lambda}_j^{k_j} \right)
\ell_{1,\ldots,n;t}^{k_1,\ldots,k_n}(\underline{x}), \\ \nonumber
\ {\cal I}(\lambda) := \{(k_1,k_2,\ldots,k_n): \ \ k_i \in \mathbb{N} \cap [1,N],\  \prod_{j=1}^n \lambda_j^{k_j} > \kappa \}
\label{mixture_k_cutoff}
\end{eqnarray}
$\kappa$ therefore plays the role of a "cutoff parameter" that
ensures that only significant contributions to the multivariate
expansion are retained; in order to preserve normalization of the
resulting density,
\begin{equation}
\prod_{j=1}^n \tilde{\lambda}_j^{k_j}= \frac{
  \prod_{j=1}^n \lambda_j^{k_j}
} {
  \sum_{(k_1, \ldots, k_n) \in {\cal I}(\lambda)}
  \prod_{j=1}^n \lambda_j^{k_j}
}.
\end{equation}

Note that $p_{\underline{S}(t)}(\underline{x}) =
p_{\underline{S}(t)}(\underline{x},\kappa=0)$, whereas increasing
$\kappa$ decreases the number of base multivariate densities in the
approximate expansion of Eq. (\ref{mixture_k_cutoff}); $\kappa$
therefore controls the tradeoff between the accuracy in the
approximation and the computational efficiency.

In order to have an estimate of the computational gain due to a
choice of $\kappa \ne 0$, we can compute how the volume in
$n$--dimensional space of the region $\kappa < \prod_{i=1}^n x_i \le
1$ scales for fixed cutoff as a function of $n \ge 1$: the recursive
law is
\begin{equation}
V_n(\kappa)=\int_{\kappa}^1 dx_1 \int_{\frac{\kappa}{x_1}}^1 dx_2
\cdots \int_{\frac{\kappa}{\prod_{i=1}^{n-1} x_i}}^1 dx_n =
V_{n-1}(\kappa) + \frac{(-1)^n}{(n-1)!} \, \kappa \, (\ln
\kappa)^{n-1}
\end{equation}
setting $V_0=1$ conventionally.

Now, let us neglect the striking feature that there exists a strong
hierarchy between components in the univariate mixtures of densities
(point two above). Suppose instead that we are in a less favorable
case, namely that the density of coefficients $\lambda$ of the
mixture model for each asset is uniform and equal to $\rho$ ({\it
i.e.} the distance on the $[0,1]$ interval between consecutive
$\lambda$ is equal to $\frac{1}{\rho}$); then, the density of
coefficients in the multivariate theory is $\rho^n$. An estimate of
the number of multivariate densities involved in the expansion of
Eq. (\ref{mixture_k_cutoff}) is $\mathfrak{N}_n(\kappa)=V_n(\kappa)
\rho^n$. To give an example, if $\kappa=5\%$ and $\rho=3$, the
number of densities has a maximum at $\mathfrak{N}_n(5\%) \simeq 80$
for $n \simeq 8$: neglecting the densities that contribute to 5\% of
the normalization already yields much less than the full 6581 set of
eight variate densities.

\section{Analysis and comparison of dependence structures}\label{sec:dependencest}

\subsection{Instantaneous correlations in the SCMD and MVMD models}

From \eqref{factorVol1} and \eqref{C}-\eqref{V} we can compare the expression for the instantaneous local covariance, or quadratic covariation, between asset returns in the two models, SCMD and MVMD. Without loss of generality consider a two--dimensional
process, namely take $n=2$. To lighten notation we omit the time argument in volatilities $\sigma_i^{k}(t)$. 

Recall that in a SCMD scheme the instantaneous
variance for the log $S_1$ asset, say, at time $t$  would be (see Eq. (\ref{factorVol1}))
\begin{equation}
\tilde{C}_{11}(x_1,t) =
\frac{\sum_{k=1}^N {\lambda_1}^k \sigma_1^{(k)2}
\ell_t^{(1k)}(x_1)}{\sum_{k=1}^N {\lambda_1}^k
\ell_t^{(1k)}(x_1)}
\label{tildeC11}
\end{equation}
and the instantaneous covariance of returns, or quadratic covariation, between the two assets would be
\begin{equation}
\tilde{C}_{12}(x_1,x_2,t) =
\sqrt{\frac{\sum_{k=1}^N {\lambda_1}^k \sigma_1^{(k)}
\ell_t^{(1k)}(x_1)}{\sum_{k=1}^N {\lambda_1}^k
\ell_t^{(1k)}(x_1)}} \sqrt{\frac{\sum_{k=1}^N {\lambda_2}^k
\sigma_2^{(k')} \ell_t^{(2k)}(x_2)}{\sum_{k=1}^N {\lambda_2}^k
\ell_t^{(2k)}(x_2)}} \rho
\label{tildeC12}
\end{equation}
to be compared with the expressions
\begin{equation}
C_{11}(x,x_2,t) = \frac{\sum_{k,k'=1}^N {\lambda_1}^k {\lambda_2}^{k'}
\sigma_{1}^{(k)2} \ell_{t}^{(kk')}(x_1,x_2)}{\sum_{k,k'=1}^N
{\lambda_1}^k {\lambda_2}^{k'} \ell_{t}^{(kk')}(x_1,x_2)}
\label{C11}
\end{equation}
and
\begin{equation}
C_{12}({\bf x},t) = \frac{\sum_{k,k'=1}^{\nu} {\lambda_1}^k {\lambda_2}^{k'}
\sigma_1^{(k)} \sigma_2^{(k')} \rho \,
\ell_t^{(kk')}(x_1,x_2)}{\sum_{k,k'=1}^{\nu} {\lambda_1}^k {\lambda_2}^{k'}
\ell_t^{(kk')}(x_1,x_2)}
\label{C12}
\end{equation}
of MVMD.

An evident difference is that, while in Eq. (\ref{tildeC11}) the
instantaneous covariance of log $S_1$ depends only on $x_1$ itself, and not on
$x_2$, the opposite is true of Eq. (\ref{C11}). In other words, the
diffusion matrix is now \emph{fully} dependent on the components of
the multidimensional process. Moreover, the two equations
(\ref{tildeC12}) and (\ref{C12}) are structurally different. However,
there must be a link between the two: we know that in the
limit when the correlation $\rho$ between variables $\ln(S_1)$ and $\ln(S_2)$ vanishes,
they will in fact evolve ignoring one another in both models.

By the choice we made at the beginning, $\ell_t^{(kk')}$ is a bivariate
lognormal density, {\it i.e.} it has the expression
\begin{equation}
\begin{array}{r}
\ell_t^{(kk')}(x_1,x_2) = \frac{1}{2 \pi \sqrt{{\alpha}_{11} {\alpha}_{22}-\rho^2 {\alpha}_{12}^2}}
\frac{1}{x_1 x_2} \exp\left[ -\frac{1}{2}
\tilde{x_1}^2 \frac{{\alpha}_{22}}{{\alpha}_{11}{\alpha}_{22}-\rho^2 {\alpha}_{12}^2}
-2 \tilde{x_1} \tilde{x_2} \frac{{\alpha}_{12} \rho}{{\alpha}_{11}{\alpha}_{22}-\rho^2 {\alpha}_{12}^2}
\right. \\
\\
\left.
+\tilde{x_2}^2 \frac{{\alpha}_{11}}{{\alpha}_{11}{\alpha}_{22}-\rho^2 {\alpha}_{12}^2}
\right]
\end{array}
\label{bivariate}
\end{equation}
with $\tilde{x_1}$ and $\tilde{x_2}$ defined as in Eq. (\ref{argGauss}) and
\begin{equation}
\left\{
\begin{array}{l}
{\alpha}_{11}=\int_0^t \sigma_1^{(k)^2}(s) ds \\
\\
{\alpha}_{22}=\int_0^t \sigma_2^{(k')^2}(s) ds \\
\\
{\alpha}_{12}=\int_0^t \sigma_1^{(k)}(s) \sigma_2^{(k')}(s) ds. \\
\end{array}
\right.
\end{equation}

The tetrachoric expansion for the bivariate normal density with
correlation $\rho$ reads \cite{Vasicek}
\begin{equation}
n(x_1,x_2,\rho)=n(x_1) n(x_2) \sum_{k=0}^\infty \frac{\rho^k}{k!} H_k(x_1) H_k(x_2)
\label{tetrachoric}
\end{equation}
($H_k$ is the k$^{th}$ Hermite polynomial); this, applied to
$\ell_t^{(kk')}$ yields
\begin{equation}
\begin{array}{ll}
\ell_t^{(kk')}(x_1,x_2) \simeq &
\frac{1}{\sqrt{2 \pi {\alpha}_{11}}}
\frac{1}{x_1} \exp\left[ -\frac{1}{2 {\alpha}_{11}} \tilde{x_1}^2 \right]
\frac{1}{\sqrt{2 \pi {\alpha}_{22}}}
\frac{1}{x_2} \exp\left[ -\frac{1}{2 {\alpha}_{22}}
\tilde{x_2}^2 \right] \\
\\
&
+
\frac{1}{\sqrt{2 \pi {\alpha}_{11}}}
\frac{1}{x_1} \exp\left[ -\frac{1}{2 {\alpha}_{11}} \tilde{x_1}^2 \right]
\frac{1}{\sqrt{2 \pi {\alpha}_{22}}}
\frac{1}{x_2} \exp\left[ -\frac{1}{2 {\alpha}_{22}}
\tilde{x_2}^2 \right]
\tilde{x_1} \tilde{x_2} \frac{{\alpha}_{12}}{{\alpha}_{11} {\alpha}_{22}}\rho + O(\rho^2)
\end{array}
\end{equation}
and similarly expanding Eqs. (\ref{C11}) and (\ref{C12}) we get the following
\begin{proposition} {\bf SCMD approximates MVMD for weakly correlated
systems.}
The SMCD and MVMD instantaneous covariance structures, or quadratic co-variations, coincide first order in the Brownian correlation $\rho$, namely 
\begin{equation}
\left\{
\begin{array}{l}
C_{11}(x_1,x_2,t) = \tilde{C}_{11}(x_1,t) + O(\rho^2) \\
\\
C_{12}(x_1,x_2,t) = \tilde{C}_{12}(x_1,x_2,t) + O(\rho^2). \\
\end{array}
\right.
\end{equation}
\end{proposition}
We also have the following 
\begin{corollary}\label{co:loccorr}{\bf Local correlation structure in MVMD and SCMD.}
We may define the instantaneous local correlation in a bivariate diffusion model as
\[ \rho_L(t) := \frac{d \langle S_1, S_2 \rangle_t}{\sqrt{d \langle S_1, S_1 \rangle_t \ d \langle S_2, S_2 \rangle_t } } .\] 
The instantaneous local correlation structure for SCMD is obviously the constant Brownian correlation $\rho_L(t) = \rho$, whereas for MVMD we have a smaller local correlation, in absolute value, given by
\begin{equation}
\rho_L(t) = \frac{
\rho \sum_{k,k'=1}^{\nu} {\lambda_1}^k {\lambda_2}^{k'} \sigma_1^{(k)}
\sigma_2^{(k')} \ell_t^{(kk')}(x_1,x_2)
}
{
\sqrt{
\left(
\sum_{k,k'=1}^{\nu} {\lambda_1}^k {\lambda_2}^{k'}
\sigma_1^{(k)2} \ell_t^{(kk')}(x_1,x_2)
\right)
\left(
\sum_{k,k'=1}^{\nu} {\lambda_1}^k {\lambda_2}^{k'}
\sigma_2^{(k')2} \ell_t^{(kk')}(x_1,x_2)
\right)
}
} \le \rho
\end{equation}
where the inequality follows from Schwartz's inequality.
\end{corollary}

\subsection{Terminal correlation}

In both SCMD and MVMD the log--return expectation for component $\ln S_1$ in $x_1$ is
\begin{equation}
\mathbb{E}_0\{ \ln (S_1(t)/S_1(0)) \} = \sum_k {\lambda_1}^k \int_0^t
 \left( \mu_s - \frac{\sigma_s^{(k)2}}{2} \right) ds
\end{equation}
and its variance is
\begin{equation}
\mbox{Var}_0\{ \ln(S_1(t)/S_1(0)) \} = \sum_k {\lambda_1}^k \int_0^t
\sigma_s^{(k)2} ds
\end{equation}
It is immediate to prove the following
\begin{proposition}{\bf Terminal covariance and correlations in MVMD.}
The terminal covariance of log--returns of $S_1$ and $S_2$ in SCMD is not known in closed form, whereas for MVMD we have 
\begin{equation}
\mbox{Cov}_0\{\ln(S_1(t)/S_1(0)) ,\ln(S_2(t)/S_2(0))  \} = \sum_{kk'}
{\lambda_1}^k {\lambda_2}^{k'} \rho \int_0^t \sigma_1^{(k)}(s) \sigma_2{(k')}(s)  ds,
\end{equation}
giving rise to a terminal correlation between returns up to time $t$ in MVMD given by
\begin{equation}
\hat{\rho}(t)  = \frac{\mbox{Cov}_0\left\{ \ln \frac{S_1(t)}{S_1(0)} ,\ln \frac{S_2(t)}{S_2(0)} \right\}}
{\sqrt{\mbox{Var}_0\left\{ \ln \frac{S_1(t)}{S_1(0)}\right\} \mbox{Var}_0\left\{ \ln \frac{S_2(t)}{S_2(0)}\right\}}}=
\frac{
\rho \sum_{kk'} {\lambda_1}^k {\lambda_2}^{k'} \int_0^t \sigma_1^{(k)}(s)
\sigma_2^{(k')}(s) ds
}
{
\sqrt{
(\sum_k {\lambda_1}^k \int_0^t \sigma_1^{(k)2}(s) ds)
(\sum_{k'} {\lambda_2}^{k'} \int_0^t \sigma_2^{(k')2}(s) ds)
}
}.
\end{equation}
\end{proposition}
In MVMD, the terminal correlation between assets is, again, a weighted average of terminal correlations in the various Black--Scholes states upon which the mixture is based.

Note that, instead, an analytical expression for terminal correlation does not exist for the SCMD, so that any comparison between the two must be done on a numerical basis. This is a further advantage of MVMD.

\subsection{Correlation between asset and local covariance}

It can be shown that the MVMD retains a property of single--asset mixture dynamics models regarding the terminal asset--variance correlation:
\begin{theorem}\label{th:loccovass}
Consider for all $i,j$ the random variable
$$
v_{ij}(T)=\frac{1}{T}\int_0^T a_{ij}(t,\underline{S}(t)) \, dt,
$$
$v(T)$ being the ''average percentage covariance'' of the process $\underline{S}$. Then for all $k$ 
$$
Corr_0\{a_{ij}(T,\underline{S}(T)),S_k(T)\}=0
$$
and
$$
Corr_0\{v_{ij}(T),S_k(T)\}=0
$$
for all $T$. At the same time, however, notice that $a_{ij}(T,\underline{S}(T))$ is a deterministic function of $\underline{S}$, whose components are all correlated with $S_k$.
%$$
%Corr_0\{d\sigma_{ij}(t,\underline{S}),dS_k\} \in \{+1, -1\}.
%$$
In the univariate case $n=1$, in the LMD model (\ref{eq:dcmix}), one has the striking result 
\[Corr_0\{s(T,S(T))^2,S(T)\}=0, \ \  Corr_0\left\{\int_0^T s(t,S(t))^2 dt, \  S_1(T)\right\}=0, \]
with 
\[Corr_0\{d s(t,S(t)), dS(t)\} \in \{+1, -1\}. \]
showing that terminal correlation between assets and squared volatilities is zero despite the latter being deterministic functions of the former and thus instantaneously perfectly correlated. 
\end{theorem}

The proof follows closely \cite{general_mixture_diffusion}, and will be omitted here in the interest of brevity.

\begin{remark} {\bf Mixure Dynamics Models escape volatility-asset correlation criticism of common local volatility models}. It is worth repeating here a remark already made in \cite{general_mixture_diffusion}: despite the commonly cited drawback of local volatility models (the perfect instantaneous correlation between the asset and its local variance), mixture models feature {\it vanishing terminal correlations} between assets and average variances after $t=0$. This mitigates to some extent the  objection to local volatility models, at least only for the family of mixture  models.
\end{remark}

\subsection{Copula function in MVMD}\label{sect:copula_MVMD}

\begin{proposition}
The copula function associated to MVMD (\ref{edsMVMDwithindBM1}) can be written as
\begin{equation}
C (u_1,...,u_n) = \sum_{k_1,...,k_n = 1}^{N}
\lambda_1^{k_1}...\lambda_n^{k_n}\
\Phi_{M}\left(h_1(F_{S_1(t)}^{-1}(u_1)),...,h_n(F_{S_n(t)}^{-1}(u_n))\right)
\end{equation}
where $\Phi_M$ denotes the standardized $n$-dimensional normal
distribution function with correlation matrix ${M}$ given by
\begin{equation}\label{R}
{M}_{i,j} = \frac{\Xi_{ij}^{(k_1 \cdots k_n)}(t) }{\sqrt{\Xi_{ii}^{(k_1 \cdots k_n)}(t)\   \Xi_{jj}^{(k_1 \cdots k_n)}(t)}}
\end{equation}
for $i, j \in \{1,...,n\},$ where $\Xi$ has been defined in (\ref{covarianceMatrix}).
$ F_{S_i(t)}^{-1}$ is the
inverse of the cumulative distribution function $F_{S_i(t)}$ of
$S_i(t)$ given by :
\begin{equation}
F_{S_i(t)}(x) = \sum_{k_i = 1}^{N} \lambda_i^{k_i}\
\Phi\left[\frac{1}{\sqrt{\Xi_{ii}^{(k_1,\ldots,k_n)}(t)}}\left(\ln \frac{x} {Y_i^{k_i}(0)}-\mu_i
t+\frac{1}{2}\Xi_{ii}^{(k_1,\ldots,k_n)}(t)\right)\right],
\end{equation}
where $\Phi$ is the usual standard normal cumulative distribution
function, and $h_i$ is given by
% $\forall x \in ]0,+\infty[$
%
\begin{equation}
h_i(x) = \frac{1}{V_i^{k_i}(t)}\left(\ln \frac{x}{Y_i^{k_i}(0)} -
\mu_i t+\frac{1}{2}\Xi_{ii}^{(k_1,\ldots,k_n)}(t)\right).
\end{equation}
\end{proposition}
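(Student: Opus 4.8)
The plan is to invoke Corollary \ref{sklarcorollary}, which expresses the copula directly through the joint cumulative distribution function $H := F_{\underline{S}(t)}$ and the marginal quasi-inverses. Since each $S_i(t)$ has a marginal law that is a finite mixture of lognormals (by the consistency results of the previous section together with Eq. (\ref{individualDensity})), its cdf $F_{S_i(t)}$ is continuous and strictly increasing, so the quasi-inverses are genuine inverses $F_{S_i(t)}^{-1}$ and the copula is uniquely determined. Corollary \ref{sklarcorollary} then yields
\begin{equation*}
C(u_1,\ldots,u_n) = F_{\underline{S}(t)}\!\left(F_{S_1(t)}^{-1}(u_1),\ldots,F_{S_n(t)}^{-1}(u_n)\right),
\end{equation*}
so the whole problem reduces to writing $F_{\underline{S}(t)}$ in a convenient closed form and then substituting the marginal inverses.

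First I would compute the joint cdf. Integrating the MVMD density (\ref{mixture_k}) over the orthant $\{y_1 \le x_1,\ldots,y_n \le x_n\}$ and exchanging the finite sum with the integral gives $F_{\underline{S}(t)}(\underline{x}) = \sum_{k_1,\ldots,k_n} \lambda_1^{k_1}\cdots\lambda_n^{k_n}\, G^{(k_1,\ldots,k_n)}(\underline{x})$, where $G^{(k_1,\ldots,k_n)}$ is the cumulative distribution function of the $(k_1,\ldots,k_n)$-th multivariate lognormal law appearing in the mixture (\ref{mixture_k}). The key computation is then to standardize each such term. Under the change of variables $z_i = h_i(y_i)$, where $h_i$ is exactly the standardizing transform obtained by dividing $\tilde{x}_i^{(k_1\cdots k_n)}$ of Eq. (\ref{xtild_dimn}) by its standard deviation $V_i^{k_i}(t)$, the vector of log-returns becomes jointly standard normal with correlation matrix obtained from the integrated covariance (\ref{covarianceMatrix}) by normalizing each entry by $V_i^{k_i}(t)\,V_j^{k_j}(t)$; this is precisely the matrix $\mathbf{M} = \mathbf{M}^{(k_1,\ldots,k_n)}$ of Eq. (\ref{R}). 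Hence $G^{(k_1,\ldots,k_n)}(\underline{x}) = \Phi_{\mathbf{M}}\!\left(h_1(x_1),\ldots,h_n(x_n)\right)$, and substituting $x_i = F_{S_i(t)}^{-1}(u_i)$ into the mixture produces the claimed formula. The marginal expression for $F_{S_i(t)}$ quoted in the statement arises along the way as the one-dimensional specialization of the same computation, obtained by integrating out all coordinates but the $i$-th in (\ref{mixture_k}).

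The only genuine subtlety — and the point to keep explicit throughout — is that both the standardizing maps $h_i$ and the correlation matrix $\mathbf{M}$ silently depend on the summation indices $(k_1,\ldots,k_n)$, so each summand is a distinct standardized multivariate normal cdf with its own correlation matrix, evaluated at index-dependent arguments. In particular the inner arguments $h_i(F_{S_i(t)}^{-1}(u_i))$ do \emph{not} collapse to $\Phi^{-1}(u_i)$, because $F_{S_i(t)}^{-1}$ inverts the full mixture margin whereas $h_i$ standardizes only the $k_i$-th lognormal component. Consequently the MVMD copula is a convex combination of standardized multivariate normal distributions with appropriate correlation matrices and margins, rather than a single Gaussian copula; carrying this index dependence carefully through the standardization step is the main thing needed to make the argument rigorous.
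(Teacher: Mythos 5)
Your proposal is correct and follows essentially the same route as the paper's own proof: apply Corollary \ref{sklarcorollary}, decompose the joint cdf $F_{\underline{S}(t)}$ as the mixture of the multivariate lognormal cdfs $F_{[Y_1^{k_1}(t),\ldots,Y_n^{k_n}(t)]^T}$, and identify each such cdf with $\Phi_{\mathbf{M}}\left(h_1(\cdot),\ldots,h_n(\cdot)\right)$ via the standardizing maps $h_i$. In fact you spell out two points the paper leaves implicit (the sum--integral exchange behind the mixture form of the joint cdf, and the dependence of $h_i$ and $\mathbf{M}$ on the indices $(k_1,\ldots,k_n)$), which only strengthens the argument.
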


\vspace{3mm}
\begin{proof}
We start by using the characterizaton (\ref{mixture_k}) of the multivariate law of MVMD, 
where each $Y_i^{k_i}$ evolves lognormally according to the SDE (\ref{edsSi_k}).
Using Corollary on page 47 of Nelsen \cite{nelsen}, we see that the MVMD copula is
\begin{equation}
C(u_1,...,u_n)=\sum_{k_1,...,k_n=1}^{N}\lambda_1^{k_1}...\lambda_n^{k_n}F_{[Y_1^{k_1}(t),...,Y_n^{k_n}(t)]^{T}}\left(F_{S_1(t)}^{-1}(u_1),...,F_{S_n(t)}^{-1}(u_n)\right)
\end{equation}
where $F_{[Y_1^{k_1}(t),...,Y_n^{k_n}(t)]^{T}}$ is the cumulative
distribution function of the vector
$[Y_1^{k_1}(t),...,Y_n^{k_n}(t)]^{T}.$

\bigskip

Because $F_{[Y_1^{k_1}(t),...,Y_n^{k_n}(t)]^{T}}$ is a cumulative
distribution function and each $Y_i^{k_i}$ evolve lognormally, it
follows that
\begin{equation}
\forall \left(x_1,...,x_n\right) \in \mathbb{R}^n, \
F_{[Y_1^{k_1}(t),...,Y_n^{k_n}(t)]^{T}}\left(x_1,...,x_n\right) =
\Phi_{M}\left(h_1(x_1),...,h_n(x_n)\right)
\end{equation}
from which the copula's expression follows.
\end{proof}

\bigskip

\begin{corollary} The MVMD copula is a mixture of multivariate copulas that are the
standardized multivariate normal distribution with correlation
matrix ${M}$ given by (\ref{R}) and marginals $G_1,...,G_n$
defined as follows :
\begin{equation} G_i (x) =
F_{S_i(t)}\left[\exp\left(\sqrt{\Xi_{ii}^{(k_1,\ldots,k_n)}(t)}\ x + \ln (Y_i^{k_i}(0))+\mu_i
t-\frac{1}{2}\Xi_{ii}^{(k_1,\ldots,k_n)}(t)\right)\right].
\end{equation}
\end{corollary}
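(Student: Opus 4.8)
The plan is to take as given the explicit representation of the MVMD copula proved just above,
\[
C(u_1,\dots,u_n)=\sum_{k_1,\dots,k_n=1}^{N}\lambda_1^{k_1}\cdots\lambda_n^{k_n}\,\Phi_{\mathbf{M}}\left(h_1(F_{S_1(t)}^{-1}(u_1)),\dots,h_n(F_{S_n(t)}^{-1}(u_n))\right),
\]
and to recognise each summand as a standardized $n$-variate normal distribution, with correlation $\mathbf{M}$ as in (\ref{R}), whose coordinates have been passed through the marginal maps $G_i$. Since $\Phi_{\mathbf{M}}$ already is, by definition, the standardized multivariate normal distribution function with correlation $\mathbf{M}$, the only real work is to turn the inner arguments $h_i(F_{S_i(t)}^{-1}(u_i))$ into $G_i^{-1}(u_i)$ and then to read off the dependence structure. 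I keep throughout the (notationally suppressed) fact that both $\mathbf{M}$ through (\ref{R}) and each $G_i$ through $V_i^{k_i}(t)$ and $Y_i^{k_i}(0)$ carry the running multi-index $(k_1,\dots,k_n)$, so that the mixture runs over terms each endowed with its own Gaussian correlation and its own marginal maps.

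The key step is a single inversion identity. I set
\[
g_i(y)=\exp\left(V_i^{k_i}(t)\,y+\ln Y_i^{k_i}(0)+\mu_i t-\tfrac12 V_i^{k_i}(t)^2\right),
\]
so that, directly from its definition, $G_i=F_{S_i(t)}\circ g_i$. Substituting $g_i$ into the expression for $h_i$ gives at once $h_i(g_i(y))=y$ and $g_i(h_i(x))=x$, i.e. $g_i$ is the functional inverse of $h_i$. As $g_i$ is an increasing bijection of $\mathbb{R}$ onto $(0,\infty)$ and $F_{S_i(t)}$ is a genuine univariate distribution function, $G_i=F_{S_i(t)}\circ g_i$ is itself a univariate distribution function (the law of $h_i(S_i(t))$), whose inverse satisfies $G_i^{-1}=h_i\circ F_{S_i(t)}^{-1}$. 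Hence $h_i(F_{S_i(t)}^{-1}(u_i))=G_i^{-1}(u_i)$ for every $i$.

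Substituting this termwise turns the representation above into
\[
C(u_1,\dots,u_n)=\sum_{k_1,\dots,k_n=1}^{N}\lambda_1^{k_1}\cdots\lambda_n^{k_n}\,\Phi_{\mathbf{M}}\left(G_1^{-1}(u_1),\dots,G_n^{-1}(u_n)\right).
\]
To identify a single summand, let $\underline{\eta}$ be a random vector with law $\Phi_{\mathbf{M}}$, that is standard normal margins and correlation $\mathbf{M}$; since each $G_i$ is increasing, $\Phi_{\mathbf{M}}(G_1^{-1}(u_1),\dots,G_n^{-1}(u_n))=\mathbb{P}\left(G_1(\eta_1)\le u_1,\dots,G_n(\eta_n)\le u_n\right)$, so the summand is exactly the joint distribution function of $(G_1(\eta_1),\dots,G_n(\eta_n))$, i.e. the standardized multivariate normal with correlation $\mathbf{M}$ read through the coordinate maps $G_i$. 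Because copulas are invariant under increasing transformations of the margins, the dependence structure of this vector coincides with that of $\underline{\eta}$, namely the Gaussian copula attached to $\mathbf{M}$; this yields the claimed description of each mixture component and hence of $C$.

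The step I expect to demand the most care is the inversion identity $G_i^{-1}=h_i\circ F_{S_i(t)}^{-1}$ when $F_{S_i(t)}$ is not strictly increasing: there one must fall back on the quasi-inverse machinery of the preceding Definition and of Corollary \ref{sklarcorollary}, and check that $h_i\circ F_{S_i(t)}^{-1}$ still inverts $G_i$ on the relevant range. A second point worth making explicit in the write-up is the bookkeeping of the margins: the individual summands are not themselves copulas, and the uniform margins of $C$ re-emerge only after the weighted summation, where $\sum_{k_i=1}^{N}\lambda_i^{k_i}\,\Phi(h_i(\cdot))$ collapses to $F_{S_i(t)}$ by (\ref{individualDensity}). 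This is precisely why $C$, a genuine copula, is presented here as a mixture of standardized multivariate normals carrying Gaussian dependence and coordinate maps $G_i$ rather than as a convex combination of literal copulas.
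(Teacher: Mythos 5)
Your proposal is correct, and its computational core coincides with the paper's own proof: both define the exponential affine map (your $g_i$), recognise $G_i = F_{S_i(t)}\circ g_i$ as a univariate distribution function by checking monotonicity and the limits at $\pm\infty$, and extract the inversion identity $G_i^{-1} = h_i \circ F_{S_i(t)}^{-1}$ that rewrites each summand as $\Phi_{\mathbf{M}}\left(G_1^{-1}(u_1),\ldots,G_n^{-1}(u_n)\right)$. Incidentally, your worry about quasi-inverses is moot here: $F_{S_i(t)}$ is a finite mixture of lognormal distribution functions, hence continuous and strictly increasing on $(0,\infty)$, so ordinary inverses exist and Corollary \ref{sklarcorollary} applies in its simplest form.

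Where you genuinely depart from the paper is the final interpretive step, and your version is the sounder one. The paper asserts that each summand is itself a copula, "deduced using Sklar's theorem" on the grounds that $\Phi_{\mathbf{M}}$ is an $n$-dimensional distribution function and each $h_i\circ F_{S_i(t)}^{-1}$ inverts a distribution function. But Sklar's theorem and its Corollary \ref{sklarcorollary} produce a copula from $H\left(F_1^{(-1)}(u_1),\ldots,F_n^{(-1)}(u_n)\right)$ only when $F_1,\ldots,F_n$ are the \emph{margins} of $H$; the margins of the standardized Gaussian $\Phi_{\mathbf{M}}$ are standard normal, not the $G_i$. Indeed the summands fail to be copulas: setting $u_j = 1$ for $j\neq i$ leaves the $i$-th margin
\begin{equation*}
\Phi\left(G_i^{-1}(u_i)\right) = F_{Y_i^{k_i}(t)}\left(F_{S_i(t)}^{-1}(u_i)\right) \neq u_i ,
\end{equation*}
with equality for all $u_i$ only in the degenerate case $F_{Y_i^{k_i}(t)} = F_{S_i(t)}$. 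Your identification of each summand as the law of $\left(G_1(\eta_1),\ldots,G_n(\eta_n)\right)$ with $\underline{\eta}\sim\Phi_{\mathbf{M}}$ — a distribution on $[0,1]^n$ carrying the Gaussian copula of correlation $\mathbf{M}$ but non-uniform margins, uniformity of the margins of $C$ being restored only after the weighted summation via $\sum_{k_i}\lambda_i^{k_i}\,\Phi(h_i(\cdot)) = F_{S_i(t)}$ — is the correct reading of the corollary's phrase "mixture of multivariate copulas with margins $G_1,\ldots,G_n$", and it supplies precisely the margin bookkeeping that the paper's own argument elides. In short: same mechanics, but you justify the structural claim probabilistically where the paper's appeal to Sklar's theorem does not go through as written.
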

\vspace{3mm}
\begin{proof}
$\Phi_{M}$ is the standardized $n$--dimensional normal
distribution function with correlation matrix ${M}$ given by
(\ref{R}).

We prove now that each
$\Phi_{M}\left(h_1(F_{S_1(t)}^{-1}(u_1)),...,h_n(F_{S_n(t)}^{-1}(u_n))\right)$
is a copula.

\bigskip

Because $\Phi_{M}$ is an $n$- dimensional distribution
function, we only need to prove that each $h_i \circ
F_{S_i(t)}^{-1}$ is inverse of a univariate distribution function
and then the result is deduced using Sklar's theorem. To this end
fix an $i \in \{1,\cdots,n\}$ and let $G_i$ be the function from
$\mathbb{R}$ to $[0,1]$ given by
$$G_i (x) = F_{S_i(t)}\left[\exp\left(\sqrt{\Xi_{ii}^{(k_1,\ldots,k_n)}(t)}\ x + \ln
(Y_i^{k_i}(0))+\mu_i t-\frac{1}{2}\Xi_{ii}^{(k_1,\ldots,k_n)}(t)\right)\right].$$
$G_i$ is a distribution function. Indeed, $G_i$ is increasing as composition of increasing
functions,  $\lim_{x \rightarrow -\infty} G_i(x) = \lim_{ x \rightarrow 0} F_{S_i(t)}(x) = 0$, $\lim_{x \rightarrow + \infty} G_i(x) = \lim_{x \rightarrow
+\infty} F_{S_i(t)}(x) = 1$, and  $G_i^{-1} = h_i \circ F_{S_i(t)}^{-1}.$
\end{proof}

\subsection{Rank correlations for normal mixtures}
We may also need a synthetic rank correlation measure for the statistical dependence between two assets returns, rather than the whole copula function. Indeed, we will use this quantity in our subsequent tests. To this end, we now compute Kendall's tau for a bivariate distribution that is a mixture of two bivariate normal distributions. The proof of the following proposition is straightforward.
\begin{proposition}\label{th:taumd}
Let us consider a bivariate random variable $(X,Y)$ defined as a
mixture of $2$ bivariate Gaussian random variables $(X_a,Y_a)$ and
$(X_b,Y_b)$.  $\lambda$ denotes the mixture coefficient. $\mu_{X_a}$
(resp. $\mu_{Y_a}$, $\mu_{X_b}$ and $\mu_{Y_b}$) denotes the mean of
$X_a$ (resp. $Y_a$, $X_b$ and $Y_b$). $\sigma_{X_a}$ (resp.
$\sigma_{Y_a}$, $\sigma_{X_b}$ and $\sigma_{Y_b}$) denotes the
standard deviation of $X_a$ (resp. $Y_a$, $X_b$ and $Y_b$). $\rho_a$
(resp. $\rho_b$) denotes the correlation between $X_a$ and $Y_a$
(resp. $X_b$ and $Y_b$).
\newline
Kendall's tau for $(X,Y)$ is given by:
\begin{eqnarray}\label{taumix}
\tau (X,Y) & = & \frac{2}{\pi} \left[\lambda^2  \arcsin
\left(\rho_a\right) + (1 - \lambda)^2  \arcsin
\left(\rho_b\right)\right] + 2 \lambda (\lambda - 1)\nonumber\\
 & + & 4 \lambda (1 -
\lambda)\left[ \Phi_{\rho} (m_X,m_Y) +  \Phi_{\rho} (-
m_X, - m_Y)\right],
\end{eqnarray}
where
$$m_X = \frac{\mu_{X_a} - \mu_{X_b}}{\sqrt{\sigma^2_{X_a} +
 \sigma^2_{X_b}}},\ \  m_Y = \frac{\mu_{Y_a} - \mu_{Y_b}}{\sqrt{\sigma^2_{Y_a} +
 \sigma^2_{Y_b}}},$$
$$\rho = \frac{\rho_b\ \sigma_{X_b}\ \sigma_{Y_b}}{\sqrt{\sigma^2_{X_a} +
 \sigma^2_{X_b}}\ \sqrt{\sigma^2_{Y_a} + \sigma^2_{Y_b}}} + \frac{\rho_a\ \sigma_{X_a}\ \sigma_{Y_a}}{\sqrt{\sigma^2_{X_a} +
 \sigma^2_{X_b}}\ \sqrt{\sigma^2_{Y_a} + \sigma^2_{Y_b}}}$$
and $\Phi_{\rho}$ is the cumulative distribution function of the bivariate normal variable with correlation coefficient $\rho$, with zero means and unit variances. 
\end{proposition}

\section{Markovian projections}\label{sect:Markovian_projection}

In this section we provide two Markovian projection results. First, we introduce a model that we call Multivariate Uncertain Volatility Model (MUVM) and we prove that MVMD is a
multivariate Markovian projection of MUVM. Secondly, we study the Markovian projection for the basket price process, deriving a consistency mixture result between the multivariate distribution and the basket dynamics for geometric baskets.

\subsection{MVMD as projection of MUVM}
We now introduce the Multivariate Uncertain Volatility Model (MUVM).
This is a model specified through a system of SDEs of the form
\begin{equation}\label{edsuncertain_volatily_model}
d\xi_i(t) = \mu_i\ \xi_i(t) dt + \sigma^{I_i}_i(t)\ \xi_i(t) dZ_i(t),\ \ i
= 1,...,n,
\end{equation}
where each $Z_i$ is a
standard one dimensional Brownian motion, $\mu_i$ are
constants, $\sigma^I:= [\sigma^{I_1}_1,\ldots,\sigma^{I_n}_n]^T$ is a random vector independent of $Z$ and representing uncertain volatilities. 
We assume that the assets
$\xi_i$ are pairwise correlated through the driving Brownian
motions covariation. To be more specific we assume that $d \left \langle Z_i,
Z_j \right \rangle_t = \rho_{i,j} dt$.
What is actually random in
$\sigma^I$ are the indices $I_1,\ldots,I_n$, in the different $\sigma^I$ components, each if which can take values $1,2,\ldots,N$ with different probabilities. $I_1,\ldots,I_n$ are assumed to be mutually independent.

More specifically,
each $\sigma^{I_i}_i$ takes values in a set of $N$ deterministic functions
$\sigma_i^k$ with probability $\lambda_i^k$ ($\sigma_i^{k}$ and
$\lambda_i^{k}$ as defined in the previous section). We thus have, for all times in $(\varepsilon,+\infty)$, with small $\varepsilon$,
\begin{equation*}
(t \longmapsto \sigma^{I_i}_i(t)) = \left\{\begin{array}{l} (t \longmapsto
\sigma_i^1(t))\ \  \mbox{with} \ \mathbb{Q} \ \ \mbox{probability}\ \  \lambda_i^1 \\
(t \longmapsto
\sigma_i^2(t))\ \   \mbox{with} \ \mathbb{Q} \ \ \mbox{probability}\ \  \lambda_i^2 \\
\vdots \\
(t \longmapsto \sigma_i^N(t))\ \  \mbox{with} \ \mathbb{Q} \ \ \mbox{probability}\ \
\lambda_i^N
\end{array} \right.
\end{equation*}
We assume that all the above volatilities for asset $i$ have a common time-path from 0 to $\epsilon/2$, and then from the reached common value $\sigma_i(\epsilon/2)$ at time $\epsilon/2$ each time-function connects  to the relevant $\sigma_i^k(\epsilon)$ to continue then as $\sigma_i^k$. This is an initial regularization that is needed to make the dynamics smooth and ensure existence and uniqueness of solutions for the related equation. If $\epsilon$ is small and the volatilities are smooth in time then we may neglect the initial regularization when computing expectations. We also assume that randomness of the time functions (or of the random indices $I$) is realized at time $\epsilon/2$. Hence the uncertainty is quite short-lived and after that every asset follows a geometric Brownian motion. For an analogous analysis of the univariate case and a discussion see \cite{general_mixture_diffusion}. 

\begin{remark}{\bf MUVM vs MVMD as financial models}. This is a good point to mention that the feature we just mentioned makes MUVM a quite stylized and debatable model, and indeed MVMD, whose link with MUVM we are going to clarify now, is definitely more interesting and well behaving. This is why we stress that MUVM is interesting both as a mathematical tool to originate MVMD and as a tool to clarify a number of features on dependence in  MVMD, but as pricing and hedging model per se MVMD remains superior in terms of smoothness, consistency and dynamics. While we support the use of MVMD, we do not recommend the use of MUVM as a standalone model. 
\end{remark}

%\subsection{Markovian projection}
\begin{lemma}
%\begin{gyongilemma}
\label{gyongilemma} {\bf (Gy{\"o}ngi's Lemma \cite{gyongi})}.
Let us consider an $n$-dimensional stochastic process
$(\underline{\xi}_t)_{t\geq 0}$ starting from $0$ with the It{\^{o}} form:
\begin{equation}\label{eq:Gy{\"o}ngilem}
d\underline{\xi}_t = \underline{\beta}(t,\underline{\xi}_t) dt +
v(t,\underline{\xi}_t) d\underline{W}_t,
\end{equation}
where $\underline{W}$ is a standard $d$-dimensional Brownian motion,
$\underline{\beta}$ is an $n$-dimensional bounded process and $v$ is
an $n \times d$ bounded process with $vv^T$ being uniform positive
definite. There exists a Markovian $n$-dimensional process
$(\underline{X}_t)_{t\geq 0}$ which has the same distribution as $(\underline{\xi}_t)_{t\geq 0}$ at each fixed single time $t$, and which is a weak
solution to the following stochastic differential equation:
\begin{equation}
d\underline{X}_t = \underline{\mu}(t,\underline{X}_t) dt +
\sigma(t,\underline{X}_t) d\underline{W}_t, X_0 = 0,
\end{equation}
where
\begin{equation}
\sigma \sigma^T(t,\underline{x}) =
\mathbb{E}\left[vv^T|\underline{\xi}_t = \underline{x}\right],
\underline{\mu}(t,\underline{x}) =
\mathbb{E}\left[\underline{\beta}|\underline{\xi}_t =
\underline{x}\right].
\end{equation}
%
%\end{gyongilemma}
$\underline{X}$ is called the Markovian projection (in dimension $n$) of $\underline{\xi}$.
\end{lemma}

\vspace{3mm}

\begin{theorem}
The MVMD model is a Markovian projection in dimension $n$ of the MUVM.
\end{theorem}

\vspace{3mm}
\begin{proof}
The system of SDEs (\ref{edsuncertain_volatily_model}) can be
written in the following manner
\begin{equation}\label{edsscenariomodel} d\underline{\xi}(t) = diag(\underline{\mu})\ \underline{\xi}(t)\ dt + diag (\underline{\xi}(t))\ {A^I(t)}\
d\underline{W}(t)
\end{equation}
with $W$ a vector of $n$ independent
standard Brownian motions and ${A^I(t)}$ the Cholesky
decomposition of the covariance matrix
$\Sigma^I_{i,j}(t) := \sigma^{I_i}_i(t) \sigma^{I_j}_j(t)\ \rho_{ij}.$ This is our process \eqref{eq:Gy{\"o}ngilem} in Lemma~\ref{gyongilemma}.

\bigskip

The MVMD model given by (\ref{edsMVMDwithindBM1}) can
be written as
\begin{equation}\label{edsMVMDwithindBM}
d\underline{S}(t) =  diag(\underline{\mu})\ \underline{S}(t)\ dt +
\sigma (t,\underline{S}(t))\ d\underline{W}(t)
\end{equation}
where $\sigma(t,\underline{S}(t)) := diag(\underline{S}(t))\
{C}(t,\underline{S}(t)) {B}$. 

\bigskip

The Markovian process $\underline{S}$ verifies: (i) $\underline{S}$ and $\underline{\xi}$ have identical one-dimensional (in time) distributions, i.e. they have identical distributions at every single time $t$ conditional on the common initial condition at time $0$. (ii) The following equality holds:
\begin{equation}
  \mathbb{E}[vv^{T}|\underline{\xi}(t) = \underline{x}]  = \sigma \
\sigma^T(t,\underline{x}).
\end{equation}

\bigskip

To show this, denote $v(t,\underline{\xi}(t)) = diag(\underline{\xi}(t)) {A^I(t)}$ so that
\begin{equation}\label{Markov_projection}
\mathbb{E}[vv^{T}|\underline{\xi}(t) \in d \underline{x}] =
\frac{\mathbb{E}[diag(\underline{\xi}(t))\ \Sigma\
diag(\underline{\xi}(t))\ 1_{\{\underline{\xi}(t) \in
d\underline{x}\}}]}{\mathbb{E}[1_{\{\underline{\xi}(t) \in
 d\underline{x}\}}]}.
\end{equation}
%

%\begin{equation}\label{Markov_projection}
%\mathbb{E}[vv^{T}|\underline{\xi}(t) = \underline{x}] =
%\frac{\mathbb{E}[diag(\underline{\xi}(t))\ \Sigma\
%diag(\underline{\xi}(t))\ \delta_{\{\underline{\xi}(t) =
%\underline{x}\}}]}{E[\delta_{\{\underline{\xi}(t) =
% \underline{x}\}}]}
%\end{equation}
%%
%where $\delta$ is a Dirac delta function.

\bigskip

%It is easy to see that the density of the multivariate process
%$\underline{\xi}$ is
%\begin{equation}\label{pdfS}
%p_{\underline{\xi}(t)}(\underline{x}) = \sum_{k_1,...,k_n = 1}^{N}
%\lambda_1^{k_1}...\lambda_n^{k_n}\
%\ell_{1,\ldots,n;t}^{k_1,\ldots,k_n} (\underline{x})
%\end{equation}
%and then we have, with straightforward calculations,
%\begin{equation}\label{SigmapdfS}
%\Sigma(t) \ p_{\underline{\xi}(t)}(\underline{x}) = \sum_{k_1,...,k_n =
%1}^{N} \lambda_1^{k_1}...\lambda_n^{k_n}\
%\mathbf{V}^{k_1,...,k_n}(t) \ell_{1,\ldots,n;t}^{k_1,\ldots,k_n}
%(\underline{x})
%\end{equation}
%where
%\begin{equation}\mathbf{V}^{k_1,...,k_n}(t) =
%\left[\sigma_i^{k_i}(t)\ \rho_{ij}\ \sigma_j^{k_j}(t)\right]_{i,j =
%1,...,n}
%\end{equation}
%It follows from (\ref{pdfS}) and (\ref{SigmapdfS}) that
Calculate the probability density of $\xi$ as
\begin{eqnarray*}
\mathbb{E}[1_{\{\underline{\xi}(t) \in d \underline{x}\}}] & = & \mathbb{E}\left[ \sum_{k_1,...,k_n =1}^{N} 1_{\{ I_1 = k_1,\ldots,I_n = k_n\}}  1_{\{\underline{\xi}(t) \in d \underline{x}\}}\right]\\
 & = & \sum_{k_1,...,k_n = 1}^{N} \lambda_1^{k_1}...\lambda_n^{k_n}\
\ell_{1,\ldots,n;t}^{k_1,\ldots,k_n} (\underline{x}) \ d\underline{x} 
\end{eqnarray*}
and notice it is the same as the density for MVMD, where we have used independence of $I_i$ of each other and of $W$ to factor the expectation of indicators, and similarly
\begin{eqnarray}
\mathbb{E}[diag(\underline{\xi}(t))\ \Sigma\
diag(\underline{\xi}(t))\ 1_{\{\underline{\xi}(t) \in
d\underline{x}\}}] = & &  \nonumber\\
diag(\underline{x}) \sum_{k_1,...,k_n = 1}^{N}
\lambda_1^{k_1}...\lambda_n^{k_n}\ {V}^{k_1,...,k_n}(t) &
\ell_{1,\ldots,n;t}^{k_1,\ldots,k_n} (\underline{x})\ 
diag(\underline{x}) \ d\underline{x}
 \nonumber\\
 & &
\end{eqnarray}
where ${V}$ had been defined in \eqref{V}.

A substitution in (\ref{Markov_projection}) gives
\begin{equation*}
\mathbb{E}[vv^{T}|\underline{\xi}(t) = \underline{x}]  = \sigma \
\sigma^T(t,\underline{x}).
\end{equation*}

\bigskip

We conclude by invoking Gy\"{o}ngi's Lemma \ref{gyongilemma}.
\end{proof}

\begin{corollary}
The process $\underline{\xi}$ has the same distribution function as
the Markovian process $\underline{S}$ for any time $t.$ Then the
MUVM can be used instead of the MVMD model to price European options if convenient.
\end{corollary}
%\begin{proof}
%It is a consequence of the Markovian projection property.
%\end{proof}

\begin{remark}
The MUVM features the following interesting properties: explicit dynamics, explicit density function,
semi-analytic formulas for European-style derivatives, and semi-analytic formulas for early exercise derivatives (eg. American Options).
The last property follows via an iterated expectation, with the internal filtration  referencing information at time $\epsilon/2$, and is not shared by the Markovian projection MVMD. See again \cite{general_mixture_diffusion} for a discussion of the univariate case.
\end{remark}

We conclude the analysis of the MUVM model with the following Corollary and Remark.
\begin{corollary}
The MUVM has the same copula function as the MVMD model.
\end{corollary}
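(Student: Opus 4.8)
The plan is to observe that a copula is a functional of the joint distribution function alone, so that the statement follows immediately from the equality of the fixed-time laws of the two processes already established.

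First I would recall the Corollary following the Markovian projection Theorem: the MUVM process $\underline{\xi}$ and the MVMD process $\underline{S}$ have identical $n$-dimensional distribution functions at every fixed time $t$. Write $H_t$ for this common joint \cdf and $F_1,\ldots,F_n$ for its univariate margins. Since $\underline{\xi}(t)$ and $\underline{S}(t)$ share the same joint law, they also share the same margins, namely the lognormal-mixture \cdf s $F_{S_i(t)}$ displayed in the Proposition. I would then apply Corollary \ref{sklarcorollary}, which expresses the copula purely through the joint \cdf and the quasi-inverses of the margins, $C(u_1,\ldots,u_n)=H_t\big(F_1^{(-1)}(u_1),\ldots,F_n^{(-1)}(u_n)\big)$. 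Because $H_t$ and each $F_i$ are identical for the two models, the right-hand side is the same for MUVM and MVMD, so their copulas coincide; in particular the MUVM copula equals the mixture-of-Gaussian-copulas expression derived for MVMD.

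As an alternative route to the same conclusion, I would note that the lognormal-mixture margins are continuous (indeed strictly increasing on their support), so the uniqueness clause of Sklar's theorem \ref{sklartheorem} already guarantees that the copula is determined by $H_t$ alone; equality of the fixed-time joint laws then forces equality of the copulas. There is essentially no obstacle here: all of the analytic content has been discharged by the earlier Markovian projection result, which supplies the equality of fixed-time joint distributions, combined with the purely structural fact that the copula depends on the law only through the joint distribution function. The single point worth stating explicitly is the continuity of the margins, which is what lets us identify the copula with the explicit MVMD formula rather than merely with a copula agreeing on the ranges of the margins.
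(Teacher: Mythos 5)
Your proposal is correct and follows essentially the same route as the paper, whose entire proof is the one-line remark that the corollary ``is a consequence of the markovian projection property''; you simply make explicit what that remark compresses, namely that the projection yields equality of the fixed-time joint laws and that the copula (via Sklar's theorem and Corollary \ref{sklarcorollary}, using continuity of the lognormal-mixture margins) depends on the law only through the joint distribution function. Your added observation about marginal continuity, which pins down the copula uniquely and identifies it with the explicit MVMD mixture formula, is a worthwhile detail the paper leaves implicit.
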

\begin{proof}
This is an immediate consequence of the Markovian projection property.
\end{proof}

\begin{remark}{\bf Revisiting the asset- instantaneous covariation decorrelation in MVMD.}
We now further comment on the MVMD result on correlation between assets and their instantaneous variances (squared volatilities) and covariances. As we mentioned in the introduction and as we have seen in detail in Theorem \ref{th:loccovass}, in MVMD we have zero correlation  between assets and instantaneous covariances. This is surprising at first sight, since all instantaneous covariances are deterministic functions of the correlated joint assets. However, the result becomes more intuitive when thinking about the relationship with MUVM. The zero correlation is the best approximation MVMD can attain for its non-Markovian originator MUVM, where instantaneous covariations and assets Brownian shocks are fully statistically independent. 
\end{remark}

\subsection{Markovian projection for the geometric basket dynamics}

Consider now the geometric basket (\ref{weighted_geometric_average}) and set 
$w'_i:= w_i / (w_1 + \ldots + w_n)$, so that we write  
\begin{equation}
B_t = \prod_{i=1}^n S_i^{w'_i}
\label{geomBasket}
\end{equation}
For notation convenience we will omit the index in $w'$, writing simply $w$. $w$ is the row vector of weights in the basket.  
The problem we face now is the following. We may consider the dynamics of $B$ in the MVMD model. Such dynamics if clearly non-Markovian with respect to the filtration generated by $B$ itself. However, we may attempt a Markovian projection by trying to find the local volatility of the basket such that the basket marginal distributions are the same as in the original MVMD model.
The true local volatility for the basket is easily obtained by isolating the diffusion coefficient in 
$d \ln B_t = w\ d \ln(\underline{S}(t))$, where $dS$ follows (\ref{edsMVMDwithindBM1}), and is given by 
$$
\sigma_B(t,\underline{S}) = w\ C(t,\underline{S}) B_\rho \ \   
$$
where we added the index $\rho$ to distinguish the factor matrix $B$  in $B B^T =\rho$ from the basket. Since the basket $B$ is one dimensional, its distribution does not change if we replace the vector $\sigma_B(t,\underline{S}) d \underline{W}$ with 
$\sqrt{ \sigma_B(t,\underline{S})\ \sigma_B(t,\underline{S})^T } dW_1$ where $W_1$ is a scalar standard Brownian motion. This means that we can take as true squared basket volatility the quantity $\sigma_B(t,\underline{S})\ \sigma_B(t,\underline{S})^T =: \sigma_{B}^1(t,\underline{S})^2$
leading to 
\[ \sigma_{B}^1(t,\underline{S})^2 = w \  a(t,\underline{S})  w^T = \sum_{i,j=1}^n a_{i,j}(t,\underline{S})  w_i w_j.   \]
We may now consider the Markovian projection of the true basket dynamics with volatility $\sigma^1_B$ into Markovian one-dimensional diffusions. This is done via Gy{\"o}ngi's lemma above. We assume that the basket drift is not a problem, as it is generally driven by no-arbitrage constraints. We rather focus on the volatility. The local volatility formula from Gy{\"o}ngi's lemma is 
\begin{equation}
\sigma_{B,loc}^2(t,B)=\mathbb{E}\{\sigma^1_B(t,\underline{S}(t))^2 | B(t)=B \} = \frac{\mathbb{E}\{\sigma_B^2(t,\underline{S}(t)) 1_{\{ B_t \in dB \}} \}}{\mathbb{E}\{1_{\{ B_t \in dB \}}\}}.
\label{sigma_Bloc}
\end{equation}
We will now derive a closed form solution expression for this formula in detail, with interesting implications for the final result. 
To compute (\ref{sigma_Bloc}), we will first focus on the denominator, and then on the numerator. 
The calculation of the expectation for the denominator solves all the technical issues for a straightforward expression for the numerator, so that it is indeed best starting from the denominator. Remembering that the multivariate density for $\underline{S}(t)$ in MVMD  is a mixture of multivariate lognormals as in (\ref{mixture_k}), we have
\begin{eqnarray}\label{eq:denomproof}
\mathbb{E}\left\{ 1_{\{ \prod_i S_i(t)^{w_i} \in dB   \}} \right\}=\int{d \underline{y} \, 1_{\{ \prod_i y_i^{w_i} \in dB   \}}}
\sum_{k_1,k_2,\cdots k_n=1}^N
\lambda_1^{k_1} \cdots \lambda_n^{k_n}
\ell_{1,\ldots,n; t}^{k_1,\ldots,k_n}(\underline{y}) 
%\\ \nonumber
%=\sum_k \lambda_k \int{d \underline{S} \,\delta\left(\prod_i S_i^{w_i}-B\right) p_k(\underline{S},t)}.
\end{eqnarray}
Each of these integrals is performed on a multivariate lognormal $\ell_{1,\ldots,n;t}^{k_1,\ldots,k_n}(\underline{y})$. For a generic multi--index $k_1,\ldots,k_n$ in the sum (omitted in the following for ease of notation) the corresponding term can be recast as an integral over a standard $n$-dimensional Gaussian with covariance matrix $\Xi$ defined earlier: denoting by $F_i(t)$  the $t$--forward asset price, and defining $x_i=\ln\frac{S_i}{F_i(t)}+\frac{\Xi_{ii}}{2}$
\begin{equation}
\int d \underline{x} \, 1\left\{ \prod_i F_i^{w_i} \exp \left[-\frac12 \sum_i w_i \Xi_{ii} + \sum_i w_i x_i \right] \in dB \right\} n(\underline{x};\Xi) = \left(- \frac{d}{dB} \int_{D_B} d \underline{x} \, n(\underline{x};\Xi)\right) dB
\label{mainIntegral}
\end{equation}
where 
\begin{equation}
D_B=\{\underline{x} \in R^n \, | \, \prod_i F_i^{w_i} \exp \left[\sum_i w_i x_i \right] -B>0 \} = 
\{\underline{x} \in R^n \, | \, \underline{w} \cdot \underline{x} >\gamma_B \};
\end{equation}
%and the last passage in (\ref{mainIntegral}) comes from the relationship between the Dirac delta and the Heaviside step function.
$n(\underline{x};\Xi)$ is the multivariate normal distribution density with zero mean and covariance matrix $\Xi$, calculated at $\underline{x}$, and $\gamma_B$ is defined as 
$$
\gamma_B=\ln \left( \frac{B}{\prod_i F_i^{w_i}\exp \left[-\frac12 \sum_i w_i \Xi_{ii} \right]}\right).
$$
Note that $\gamma_B$ contains all the dependence on the basket value. The term to be differentiated in (\ref{mainIntegral}) is nothing but the integral of a multidimensional Gaussian over a half space (the domain $D_B$) so it is bound to be computed easily.

To calculate it we need a few changes of variable which are purely linear--algebraic. Remember that
$$
n(\underline{x};\Xi)=\frac{1}{(2  \pi)^{\frac{n}{2}} \sqrt{\det \Xi}} \exp [-\frac{1}{2} \underline{x}^T \Xi^{-1} \underline{x}];
$$
diagonalize $\Xi=S^T \Lambda S$, with $\Lambda$ diagonal and $S$ unitary.

Let $\underline{y}=S \underline{x}$. Then, (\ref{mainIntegral}) becomes
$$
\frac{1}{(2  \pi)^{\frac{n}{2}} \sqrt{\det \Xi}} \int_{\widetilde{D}_B} d\underline{y} \, \exp [-\frac{1}{2} \underline{y}^T \Lambda^{-1} \underline{y}]
$$
where now
$$
\widetilde{D}_B=\{\underline{y} \in R^n \, | \, \underline{w}^T S^T \underline{x} >\gamma_B \}.
$$

Now let $\underline{z}=\sqrt{\Lambda^{-1}} \underline{y}$; then the above integral becomes ($|| \, ||$ is the Euclidean norm)
$$
\left\{
\begin{array}{l}
\frac{1}{(2 \pi)^{\frac{n}{2}}} \int_{\bar{D}_B} d\underline{z} \, \exp[-\frac{1}{2} ||\underline{z}||^2], \\
\\
\bar{D}_B=\{\underline{z} \in R^n \, | \, \underline{w}^T S^T \sqrt{\Lambda} \underline{z} >\gamma_B \}.
\end{array}
\right.
$$

Denote by $\Gamma$ any orthonormal matrix such that $\Gamma \underline{w}^T S^T \sqrt{\Lambda} = ||\underline{w}^T S^T \sqrt{\Lambda}|| \hat{e}_n$ , $\hat{e}_n=(0,0,\dots 1)^T$ and define finally $\underline{\xi}=\Gamma \underline{z}$. The integral to be calculated now becomes
\begin{equation}
\left\{
\begin{array}{l}
\frac{1}{(2 \pi)^{\frac{n}{2}}} \int_{\Delta_B} d\underline{\xi} \, \exp[-\frac{1}{2} ||\underline{\xi}||^2], \\
\\
\Delta_B=\{ \underline{\xi} \in R^n \,|\, ||\underline{w}^T S^T \sqrt{\Lambda}|| \,\xi_n >\gamma_B \}.
\end{array}
\right.
\label{finalIntegral}
\end{equation}

(\ref{finalIntegral}) finally becomes
$$
\int_{\frac{\gamma_B}{||\underline{w}^T S^T \sqrt{\Lambda}||}}^{+\infty} d\xi_n \, \frac{1}{\sqrt{2 \pi}} \exp[-\frac{1}{2} \xi_n^2]=1-\Phi\left( \frac{\gamma_B}{||(\underline{w}^T S^T \sqrt{\Lambda}||} \right),
$$
$\Phi$ being a one--dimensional cumulative normal; therefore, by differentiating with respect to $B$, (\ref{mainIntegral}) finally becomes the simple expression
$$
n \left( \frac{\gamma_B}{||\sqrt{\Lambda} S \underline{w} ||} \right) \frac{1}{||\sqrt{\Lambda} S \underline{w}||} \frac{1}{B}
$$
with $n$ denoting the standard one--dimensional Gaussian density.

Note that
$$
||\sqrt{\Lambda} S \underline{w}||^2=\sum_{i,j} w_i w_j \Xi_{ij}
$$
 is nothing but the variance of $B$ in (\ref{geomBasket}).

Restoring the $k$--indexation of the MVMD, with $\underline{k} = (k_1,\ldots,k_n)$, the denominator (\ref{eq:denomproof}) in (\ref{sigma_Bloc}) can be written
$$
\frac{1}{B} \sum_{\underline{k}} \lambda_1^{k_1} \cdots \lambda_n^{k_n} n\left( \frac{\gamma_B^{\underline{k}}}{\sqrt{\sum_{i,j=1}^n w_i w_j \Xi^{(\underline{k})}_{ij}(t)}} \right) \frac{1}{\sqrt{\sum_{i,j=1}^n w_i w_j \Xi^{(\underline{k})}_{ij}(t)}},
$$
which reveals itself as a linear combination of lognormal densities in $B$.
%
% Numerator
%

We can now move to calculating the numerator in (\ref{sigma_Bloc}), namely $\mathbb{E}\{\sigma_B^1(t,\underline{S}(t))^2 1_{\{ B_t \in dB\}} \}=$
\begin{eqnarray*}
=\int d \underline{y} \, 1\left\{ \prod_i y_i^{w_i} \in dB\right\} \sum_{i,j=1}^n w_i w_j a_{ij}(t,\underline{y}) \sum_{\underline{k}}
\lambda_1^{k_1} \cdots \lambda_n^{k_n}
\ell_{1,\ldots,n; t}^{k_1,\ldots,k_n}(\underline{y}) \\
=\sum_{i,j=1}^n w_i w_j \sum_{\underline{k}} \lambda_1^{k_1} \cdots \lambda_n^{k_n} V_{ij}^{(\underline{k})}(t) \int{d \underline{y} \,1 \left\{\prod_i y_i^{w_i} \in dB\right\} \ell_{1,\ldots,n; t}^{k_1,\ldots,k_n}(\underline{y})}
\end{eqnarray*}
where we have used (\ref{C}) and we have the same type of integrals as before. 
The numerator then becomes  
\begin{eqnarray*}
\mathbb{E}\{\sigma^1_B(t,\underline{S}(t))^2 1\{ B_t \in dB\} \}=\sum_{i,j=1}^n w_i w_j \sum_{\underline{k}}  \lambda_1^{k_1} \cdots \lambda_n^{k_n} V_{ij}^{(\underline{k})}(t) n\left( \frac{\gamma_B^{(\underline{k})}}{||\sqrt{\Lambda^{(\underline{k})}} S^{(\underline{k})} \underline{w}||} \right) \frac{1}{||\sqrt{\Lambda^{(\underline{k})}} S^{(\underline{k})} \underline{w}||} \frac{dB}{B} \\
=\frac{dB}{B} \sum_{\underline{k}} \lambda_1^{k_1} \cdots \lambda_n^{k_n} n\left( \frac{\gamma_B^{(\underline{k})}}{||\sqrt{\Lambda}^{(\underline{k})} S^{(\underline{k})} \underline{w}||} \right) \frac{1}{||\sqrt{\Lambda^{(\underline{k})}} S \underline{w}||} \sum_{i,j=1}^n w_i w_j V_{ij}^{(\underline{k})}(t) \\
\end{eqnarray*}
We have thus proven the following
\begin{theorem}{\bf Markovian projection of the MVMD basket dynamics.}
The squared local volatility for geometric basket dynamics associated with the MVMD model is given by
\begin{equation}
\sigma_{B,loc}^2(B,t)=\frac{\sum_{\underline{k}} \lambda_1^{k_1} \cdots \lambda_n^{k_n}     n\left( \frac{\gamma_B^{(\underline{k})}}{\sqrt{\sum_{i,j=1}^n w_i w_j \Xi_{ij}^{(\underline{k})}(t)}} \right) \frac{1}{\sqrt{\sum_{i,j=1}^n w_i w_j \Xi_{ij}^{(\underline{k})}(t)}} \sum_{i,j=1}^n w_i w_j V_{ij}^{(\underline{k})}(t)}{\sum_{\underline{k}} \lambda_1^{k_1} \cdots \lambda_n^{k_n}  n\left( \frac{\gamma_B^{(\underline{k})}}{\sqrt{\sum_{i,j=1}^n w_i w_j \Xi_{ij}^{(\underline{k})}(t)}} \right) \frac{1}{\sqrt{\sum_{i,j=1}^n w_i w_j \Xi_{ij}^{(\underline{k})}(t)}}},
\label{geomBasketLocalVariance}
\end{equation}
which is the analogous for $B$ of the original univariate LMD model volatility $s$ in (\ref{eq:dcmix}). 
In particular, remembering the expression for $\gamma_B^{(k)}$, we recognize a locally weighted average of the basket instantaneous variance calculated over the many Black--Scholes states that the mixture is based upon.
There is therefore a mixture consistency result at work:  whenever we consider a geometric basket on a MVMD model, the Markovian projection of this basket dynamics onto a univariate diffusion yields precisely the one-dimensional LMD model that served as inspiration for MVMD. The same result does not hold for arithmetic baskets. 
\end{theorem}
In this theorem we turned the usual reasoning on its head: we know that under the assumption that the densities of the geometric basket $B$ are mixtures of lognormals with constant coefficients $\lambda_k$, $B$'s local variance will indeed take the form \eqref{geomBasketLocalVariance}; but under the usual assumptions there exists a unique strong solution for the corresponding SDE as we have seen in Theorem~\ref{th:LMDE}.

\bigskip

In the case of an equity index, or for that matter any other index based on constant weights, the alternative possibilities to construct the local volatility are then
\begin{itemize}
\item to approximate the index with the corresponding geometric basket throughout the calculation, or 
\item to use \eqref{geomBasketLocalVariance} for the local volatility of the geometric basket as a proxy for the local volatility of the index, of course correcting $B$ for the mismatch between the arithmetic and the geometric average at time 0, \`a la Kemna--Vorst \cite{Vorst}.
\end{itemize}

\section{Option pricing}\label{sec:optpric}

Suppose that
$\underline{S}$ represents the vector of underlying asset prices composing the
underlying $B$ in Eq. (\ref{basket}) or Eq.
(\ref{weighted_geometric_average}). A conventional scheme for
pricing a plain option on the underlying basket in a way fully consistent
with individual local volatilities would require, according to a SCMD type approach, a
sufficiently fine time discretization coupled with a Monte Carlo
integration with instantaneous covariance given by Eq.
(\ref{factorVol1}) (or by more complicated discretization schemes for
SDEs, see e.g. Milstein's \cite{kloeden_platen}). Our MVMD model
allows instead to compute the option price (\ref{European_option_price}) %\ref{callPut})
through a set of single--step Monte Carlo integrations (one
integration for each combination $(k_1, \cdots ,k_n)$). Indeed since
the terminal distribution of ${\underline{S}}(T)$ is known, the MVMD model allows to evaluate
simple claims on a basket without time discretization. Thus, using
the MVMD approach, one can reduce the computational time
significantly. But the actual consequences of this approach are
wider, in that they affect the many--body dynamics in a deeper way.

\bigskip

Remembering (\ref{mixture_k}), it is straightforward to obtain the
model option prices in terms of the option prices associated to the
instrumental processes (momentarily thought of as underlying assets) $(Y_i^k)_{i = 1,\cdots,n, k = 1,\cdots, N}.$

\subsection{Option on an arithmetic basket}

Let us begin by considering an option of European type on the basket of securities of Eq. (\ref{basket})
with maturity $T$ and strike $K.$ The risk free interest rate is denoted by $r$ and is assumed to be constant for simplicity.
Then, if $\omega = 1$ for a call
and $\omega = -1$ for a put, the option value
(\ref{European_option_price}) can be written as
\begin{equation}\label{basket_option_equ}
\Pi = e^{-rT}  \int_{\mathbb{R}^n} \left[\omega(\sum_{k = 1}^n w_k
x_k - K)\right]^{+} p_{\underline{S}(T)} (x_1,\ldots,x_n) dx_1\ldots
dx_n
\end{equation}
where $p_{\underline{S}(T)}$ is the joint density of the random
variables $S_1(T)$,$\ldots$, $S_n(T)$ and is given by Eqs.
(\ref{mixture_k}) --(\ref{xtild_dimn}). Finally we have
\begin{equation}\label{basket_option_price_mixture}
\Pi = \sum_{k_1,\ldots, k_n = 1}^N
\lambda_1^{k_1}\ldots\lambda_n^{k_n}\ \ \Theta_{k_1,\ldots, k_n}
\end{equation}
where $\Theta_{k_1,\ldots, k_n}$ denotes the European option price
associated to the basket $\sum_{i = 1}^n w_i Y_i^{k_i}.$

\bigskip

When the value of the basket (\ref{basket}) contains short positions as well then
we are dealing with spread options.

\subsection{Spread option}

The simplest spread option is an option of the European type on the
difference of two underlying assets. The spread is naturally defined
as the instrument 
%$S = \{S(t)\}_{t\geq 0}$, whose value at time $t$
%is given by the difference
%
\begin{equation}\label{spread}
B(t) = S_2(t) - S_1(t),\  \          t\geq 0.
\end{equation}
Buying such a spread is buying $S_2$ and selling $S_1$.

\bigskip

The price of the simplest spread option is a particular case of
(\ref{basket_option_price_mixture}) and equal to
\begin{equation}\label{spread_mixture_price}
\Pi = \sum_{i,j = 1}^N \lambda_1^i \lambda_2^j\\ \ \Theta_{i,j}
\end{equation}
where $\Theta_{i,j}$ denotes the European spread option price
associated to the instrumental spread $Y_2^j - Y_1^i.$

\bigskip

For all $i,j = 1,\ldots,N$, $Y_1^i$ and $Y_2^j$ are log-normal
underlying assets evolving according to the SDE (\ref{edsSi_k}). Let
us denote the correlation coefficient between the two assets by $\rho$.
It is possible to give a Black--Scholes type formula for the price
of the European option with maturity $T$ associated to the spread
$Y_2^j - Y_1^i$ when the strike is $K = 0$, provided that the drifts
$\mu_1 = \mu_2 = r$ match the short interest rate $r$ and the volatilities
$\sigma_1^i$ and $\sigma_2^j$ are constant in time. This is of course Margrabe's 1978 formula
\cite {margrabe}. It cannot be extended to the general case $K
\neq 0$ (but the price in that case can easily be computed by a one--dimensional numerical integration.)
Besides the fact that the case $K = 0$ leads to a solution in fully closed
form, it has also a practical appeal to the market participants.
Indeed, it can be viewed as an option to exchange one asset for
another at no additional cost.
\begin{proposition}
When the strike $K = 0,$ the European spread option price is also the price of an option to exchange one asset $S_1$ for another $S_2$, and under the MVMD model is given by Formula \eqref{spread_mixture_price}, where
$\Theta_{i,j}$ is given by
\begin{equation} \label{margrabe_formula}
\Theta_{i,j} = \omega \left[x_2^j \Phi(\omega d_1^{ij}) - x_1^i
\Phi(\omega d_0^{ij}) \right],
\end{equation}
where $$d_1^{ij} = \frac{\ln(x_2^j/x_1^i)}{\sigma^{ij} \sqrt{T}} +
\frac{1}{2} \sigma^{
ij} \sqrt{T},\ \  d_0^{ij} =
\frac{\ln(x_2^j/x_1^i)}{\sigma^{ij} \sqrt{T}} - \frac{1}{2}
\sigma^{ij} \sqrt{T}$$ and $x_1^i = Y_1^i(0)$, $x_2^j = Y_2^j(0)$,
$(\sigma^{ij})^2 = (\sigma_1^{i })^2 - 2 \rho \sigma_1^i \sigma_2^j +
(\sigma_2^{j })^2, \Phi$ is the standard normal cumulative distribution
function, $T$ the maturity, and $\omega = 1$ for a call and $\omega =
-1$ for a put.
\end{proposition}
The proof is straightforward.
\subsection{Option on a geometric basket}
Let us consider an option of European type on the basket of
securities of Eq. (\ref{weighted_geometric_average}) with maturity
$T$ and strike $K.$ The short-term interest rate is denoted by $r$ and is assumed to be a deterministic constant. Then, if $\omega =1$ for a call and $\omega = -1$ for
a put, the option value (\ref{European_option_price}) can be written
as
\begin{equation}
\Pi = e^{-r T} \int_{\mathbb{R}^n}\left\{
\omega\left[\left(x_1^{w_1}\cdots x_n^{w_n}\right)^{\frac{1}{w_1 +
\cdots + w_n}} - K\right]\right\}^+
p_{\underline{S}(T)}(\underline{x}) dx_1\cdots dx_n
\end{equation}
where $p_{\underline{S}(T)}$ is the joint density of the random
variables $S_1(T),\cdots, S_n(T)$ and is given by Eqs.
(\ref{mixture_k}) --(\ref{xtild_dimn}). We have that
\begin{equation} \label{geometric_basket_option_price_mixture}
\Pi = \sum_{k_1,\cdots,k_n = 1}^N \lambda_1^{k_1} \cdots
\lambda_n^{k_n}\  \Gamma_{k_1,\cdots,k_n}
\end{equation}
where $\Gamma_{k_1,\cdots,k_n}$ denotes the European option price at
initial time $t = 0$ associated to the instrumental geometric-average basket  $\left(Y_1^{k_1^{w_1}}
\cdots Y_n^{k_n^{w_n}}\right)^{\frac{1}{w_1 + \cdots + w_n}}.$ Since this geometric average is based on
lognormal instrumental variables it is itself lognormal, and leads to Black Scholes type closed form formulas for the $\Gamma$ terms.
%
%
%%\subsubsection*{Particular case}
Let us now consider the particular case $n = 2.$ The European option
on weighted geometric average is then equal to
\begin{equation}\label{asian_option_price_n2}
\Pi = \sum_{i,j = 1}^N \lambda_1^{i} \lambda_2^{j}\ \Gamma_{i,j}
\end{equation}
where $\Gamma_{i,j}$ denotes the European option price at initial
time $t = 0$ associated to the instrumental geometric basket $\left(Y_1^{i^{w_1}}
Y_2^{j^{w_2}} \right)^{\frac{1}{w_1 + w_2}}.$
Recall that $Y_1^i$ and $Y_2^j,$ $\forall i,j = 1,\cdots, N$ are
lognormal underlying assets evolving according to the SDE
(\ref{edsSi_k}). If the drifts $\mu_1 = \mu_2 = r$ and the
volatilities $\sigma_1^i$ and $\sigma_2^j$ are constants in time,
the price of the European Call option with maturity $T$ associated
to the basket $\left(Y_1^{i^{w_1}}
Y_2^{j^{w_2}}\right)^{\frac{1}{w_1 + w_2}}$ when $K = 0$ is given by
a closed form formula.
\begin{proposition}
In the case $n=2$ and with strike $K = 0$, the price of a European Call option on a 
geometric  basket under the MVMD model is given by Formula \eqref{asian_option_price_n2} where   $\Gamma_{i,j}$ is given by
\begin{equation}
\begin{array}{l} \Gamma_{i,j} = \exp(-r T) Y_1^i(0) ^{\varpi
w_1} Y_2^j(0)^{\varpi w_2} \exp \left\{\left[(r - \frac{1}{2}
\sigma_1^{i2}) w_1 + (r - \frac{1}{2} \sigma_2^{j2}) w_2
\right]\varpi T + \right.
\\
 \\
\hspace{35mm}\left.\frac{1}{2}\left[\sigma_1^{i2} w_1^2 +
\sigma_2^{j2} w_2^2 + 2 \rho \sigma_1^i \sigma_2^j w_1
w_2\right]\varpi^2 T\right\}
\end{array}
\end{equation}
where $\varpi = \frac{1}{w_1 + w_2}$ and $\rho$ denotes the
correlation coefficient between $Y_1^i$ and $Y_2^j.$
\end{proposition}
\begin{proof}
To ease the notation we shall omit indices $i,j$.

\begin{equation}
\begin{array}{ll}
\Gamma & =  e^{-r T}\ \mathbb{E}\left\{\left[Y_1(T)^{w_1}
Y_2(T)^{w_2}\right]^{\frac{1}{w_1 +
w_2}}\right\}\\
 & \\
& =  e^{-r T} Y_1(0)^{\varpi w_1} Y_2(0)^{\varpi w_2} e^{\left[(r -
\frac{1}{2} \sigma_1^{2}) w_1 + (r - \frac{1}{2} \sigma_2^{2}) w_2
\right]\varpi T}\ \mathbb{E}\left[e^{\gamma Z}\right]
\end{array}
\end{equation}
where $Z$ is a standard normal variable and
$$\gamma =
\sqrt{\left[\sigma_1^{2} w_1^2 + \sigma_2^{2} w_2^2 + 2 \rho
\sigma_1 \sigma_2 w_1 w_2\right]\varpi^2 T}.$$ The result follows by
using $\mathbb{E} (e^{\gamma Z}) = e^{\gamma^2 / 2}.$

\end{proof}
\begin{remark}
The derivations (\ref{basket_option_price_mixture}) and
(\ref{geometric_basket_option_price_mixture}) show that a dynamics
leading to an $n$-dimensional density for the vector of underlying asset prices
that is the convex combination of $n$-dimensional basic densities
induces the same convex combination among the corresponding option
prices. Furthermore, due to the linearity of the derivative
operator, the same convex combination applies to option Greeks such as delta or gamma.
\end{remark}
\begin{remark}
The results of this section can be easily extended to hold in the case of shifted lognormal densities \cite{mixtureFX}.
\end{remark}

\section {Numerical Results: SCMD vs MVMD}\label{sect:numerical_results_pricing}

In this section we present some results for the pricing of three
typical options: European Call on a weighted arithmetic average
containing only long positions, European Call Spread option (long and short positions) and
European Call option on  a weighted geometric average of a basket.
We investigate these options in the %and correlation swap option
SCMD and MVMD frameworks in order to compare them. The performance
of our approach is investigated by comparing the prices under the
two models.

\bigskip

For numerical sake, we focus on the two dimensional case $n=2$ where each
individual component of the asset is modeled with a mixture of two
lognormal densities, $N=2$. We assume also that the short-term interest rate $r$ is deterministic and constant throughout the life of the option (i.e., until the maturity
date $T$). Then, from Eq. (\ref{European_option_price}), the
European Call prices tested in this section are given by the
risk--neutral expectation
\begin{equation}\label{europprice_rcst}
\Pi = e^{-rT} \mathbb{E}\left[\left(B_T - K\right)^+ \right]
\end{equation}
where $B$ is the underlying basket instrument at maturity $T.$

\subsection{Arithmetic basket and spread options} \label{subsect_arithmetic_average}

The European Call prices tested in this section are given by
(\ref{europprice_rcst}) where $B$ is given by (\ref{basket}) with
$(w_k)_{k=1,2}
> 0$ for the option on  a weighted arithmetic average containing only
long positions. We call this option "Vanilla basket". Instead, $B$ is
given by Eq. (\ref{spread}) for the spread option.

Note that, under MVMD, the vanilla basket option price is given by Eq.
(\ref{basket_option_price_mixture}) with $n = N = 2$ and the spread
option price is given by Eq. (\ref{spread_mixture_price}) with $N =
2$.

\bigskip

The parameters of the test baskets are given in Table
\ref{basket_parameters}. The interest rate $r$ is $5\%.$ The time to
maturity $(T)$ is one year. The strike $K$ takes the values $K =
0.7$, $K = 1$ and $K = 1.3$. In order to obtain the fair price of
the options under SCMD, 100,000 Monte Carlo runs are performed and an Euler
scheme with time step $\Delta t = 1/360$ is applied. The first
comparison uses a correlation $\rho = 0.6$. The results are given in
Table \ref{call_on_basket_rho06}. The second comparison is done for
a correlation $\rho = 1$. The results are shown in Table
\ref{call_on_basket_rho1}. The standard error of the prices is given
in parentheses.

%\bigskip

\begin{table}[htb]
\begin{center}
\begin{tabular}{|c|c|c|}
\hline
 & Vanilla Basket & Spread\\
\hline Initial prices ([$S_1(0),S_2(0)$])& [1,1] & [0.7,1.7]\\
\hline drift ([$\mu_1$,$\mu_2$]) & [5 \%,5 \%] & [5 \%,5 \%]\\
\hline [$\lambda_1^1$,$\lambda_1^2$] & [0.6,0.4] & [0.6,0.4]\\
\hline
[$\lambda_2^1$,$\lambda_2^2$] & [0.7,0.3] & [0.7,0.3]\\
\hline
[$\sigma_1^1$,$\sigma_1^2$] & [0.3,0.2] & [0.2,0.1]\\
\hline [$\sigma_2^1$,$\sigma_2^2$] & [0.25,0.35] & [0.4,0.5]\\
\hline
weights $[w_1,w_2]$ & [0.5,0.5] & [-1,1]\\
\hline
\end{tabular}
\end{center}
\caption{Basket option parameters} \label{basket_parameters}
\end{table}
\begin{table}[htb]
\begin{center}
\begin{tabular}{|c|c|c|}
\hline
\multicolumn {3}{|c|}{$K = 0.7$}\\
\hline
 & Vanilla Basket & Spread\\
\hline
MVMD & 0.3380 (0.0007) & 0.4413 (0.0019)\\
SCMD & 0.3386 (0.0007) & 0.4365 (0.0019)\\
\hline \hline
\multicolumn {3}{|c|}{$K = 1$}\\
\hline
MVMD & 0.1202 (0.0005) & 0.2868 (0.0017)\\
SCMD & 0.1200 (0.0005) & 0.2833 (0.0017)\\
\hline \hline
\multicolumn {3}{|c|}{$K = 1.3$}\\
\hline
MVMD & 0.0290 (0.0003) & 0.1810 (0.0014)\\
SCMD & 0.0296 (0.0003) & 0.1836 (0.0014)\\
\hline
\end{tabular}
\end{center}
\caption{European Call on Basket Prices and Standard Errors for $\rho
= 0.6$} \label{call_on_basket_rho06}
\end{table}
\begin{table}[htb]
\begin{center}
\begin{tabular}{|c|c|c|}
\hline \multicolumn {3}{|c|}{$K = 0.7$}\\
\hline
 & Vanilla Basket & Spread\\
\hline
MVMD & 0.3404 (0.0008) & 0.4199 (0.0018)\\
SCMD & 0.3411 (0.0008) & 0.4193 (0.0019)\\
\hline \hline
\multicolumn {3}{|c|}{$K = 1$}\\
\hline
MVMD & 0.1307 (0.0006) & 0.2611 (0.0016)\\
SCMD & 0.1305 (0.0006) & 0.2647 (0.0016)\\
\hline \hline
\multicolumn {3}{|c|}{$K = 1.3$}\\
\hline
MVMD & 0.0364 (0.0003) & 0.1661 (0.0013)\\
SCMD & 0.0373 (0.0003) & 0.1637 (0.0013)\\
\hline
\end{tabular}
\end{center}
\caption{European Call on Basket Prices and Standard Errors for $\rho
= 1$} \label{call_on_basket_rho1}
\end{table}

\newpage

In Proposition \ref{th:taumd} we derived a closed form formula (\ref{taumix}) for Kendall's tau in a normal mixture. This formula can be easily generalized to compute Kendall tau for the MVMD model. Through this formula (or alternatively simulation) for MVMD and simulation for SCMD, we now compare Kendall's tau for MVMD and SCMD when the parameters are assumed to be the same.

The initial parameters we use are given in Table \ref{initial_parameters}.

\bigskip

\begin{table}[htb]
\begin{center}
\begin{tabular}{|c|c|}
\hline
$S_1(0)$ = 1 & $S_2(0)$ = 1\\
\hline $\mu_1$ = 5 \% & $\mu_2$ = 3 \% \\
\hline
$\sigma_1^1$ = 0.3 & $\sigma_2^1$ = 0.25\\
\hline
$\sigma_1^2$ = 0.2 & $\sigma_2^2$ = 0.35\\
\hline
$\lambda_1^1$ = 0.6 & $\lambda_2^1$ = 0.7\\
\hline
$\lambda_1^2$ = 0.4 & $\lambda_2^2$ = 0.3\\
\hline
\end{tabular}
\end{center}
\caption{Initial parameters} \label{initial_parameters}
\end{table}

Computing Kendall's tau under SCMD requires the choice of a
discretization time step $\Delta t$, and the generation of discrete
time samples $\underline{S}(t_0 + j \Delta t)$ for $j = 0, 1, . . .
,M$ with $t_0 = 0$ and $t_0 + M \Delta t = T.$ The discretization
time steps $\Delta t$ should be taken with great care to make sure
that the numerical scheme used to generate the discrete samples
produce reasonable approximations. A good choice is an Euler scheme
over equal time steps of size $\Delta t = 1/360.$

The first comparison uses $\rho =
0.6.$ The results are given in Table
\ref{kendall_volatilitycons0.6}. The next comparison is done for
$\rho = -0.6$ . The results are given in Table \ref{kendall_volatilitycons_0.6}. The final comparison uses $\rho = 1.$ The results are shown in Table \ref{kendall_volatilitycons1}. The standard error value is given in parentheses.
\bigskip

\begin{table}[htb]
\begin{center}
\begin{tabular}{|c|c|c|c|}
\hline
 Maturity & Exact $\tau$ for MVMD  & $\tau$ simulation under MVMD & $\tau$ simulation under SCMD\\
\hline 1 y & 0.4016 & 0.4012 (0.0004) & 0.4092 (0.0004)\\
\hline 5 y & 0.3977& 0.3976 (0.0004) & 0.4093 (0.0004)\\
\hline 10 y & 0.3929 & 0.3930 (0.0004) & 0.4090 (0.0004)\\
\hline
\end{tabular}
\end{center}
\caption{Kendall's tau ($\tau$) under MVMD vs SCMD and Standard Errors (in parentheses) for $\rho = 0.6.$} \label{kendall_volatilitycons0.6}
\end{table}
\begin{table}[htb]
\begin{center}
\begin{tabular}{|c|c|c|c|}
\hline
 Maturity & Exact $\tau$ for MVMD  & $\tau$ simulation under MVMD & $\tau$ simulation under SCMD\\
\hline 1 y & -0.4016 & -0.4018 (0.0004) & -0.4084 (0.0004)\\
\hline 5 y & -0.3976 & -0.3976 (0.0004) & -0.4091 (0.0004)\\
\hline 10 y & -0.3927 & -0.3928 (0.0004) & -0.4090 (0.0004)\\
\hline
\end{tabular}
\end{center}
\caption{Kendall's tau ($\tau$) under MVMD vs SCMD and Standard Errors (in parentheses) for $\rho =  -0.6.$} \label{kendall_volatilitycons_0.6}
\end{table}
\begin{table}[htb]
\begin{center}
\begin{tabular}{|c|c|c|c|}
\hline
 Maturity & Exact $\tau$ for MVMD  & $ \tau$ simulation under MVMD & $\tau$ simulation under SCMD\\
\hline 1 y & 0.9109 & 0.9112 (0.0002) & 0.9940 (0.00004)\\
\hline 5 y & 0.8893 & 0.8894 (0.0002) & 0.9949 (0.00004)\\
\hline 10 y & 0.8650 & 0.8648 (0.0002) & 0.9950 (0.00004)\\
\hline
\end{tabular}
\end{center}
\caption{Kendall's tau ($\tau$) under MVMD vs SCMD and Standard Errors (in parentheses) for $\rho =  1.$} \label{kendall_volatilitycons1}
\end{table}

We see that there is more terminal dependence in absolute value in SCMD than in MVMD. In the SCMD Kendall's tau does not change with the maturity, whereas, its absolute value goes down significantly as the maturity increases in the MVMD model. The relative difference of Kendall's tau between SCMD and MVMD is
increasing with the maturity. It is relatively limited when $\rho =
\pm 0.6$ and we could see more of difference when $\rho = 1$. We will analyze this more in depth in further work, but this result is reminiscent of our correlation analysis in our earlier Corollary \ref{co:loccorr}.
\bigskip

Despite this difference,
the basket option price is not very sensitive to the difference between the two models, and indeed Table
\ref{call_on_basket_rho1} shows that the prices obtained by the two
models are close. Table \ref{call_on_basket_rho06} shows that this
feature is maintained for a correlation $ \rho = 0.6$. Notice that
the prices obtained by the two models when dealing with a basket
option with long positions are closer than in the case of a spread option. The price of the basket option with long
positions increases with the correlation between the assets
for all strikes whereas the price of the spread option decreases. Obviously, increasing the strike decreases dramatically the
prices of both options in the two models for all values
of correlation. The price of the spread option is higher than the
price of the basket option with long positions and the difference
between the two option prices becomes smaller as the correlation
increases. These features hold
for all strikes in the MVMD and SCMD models and are quite reasonable.

\bigskip

These results seem to suggest that an option on an arithmetic basket containing only long
positions and a spread option are not affected in an extreme way by the
dependence between the different assets since even models that give
different Kendall's tau give quite similar prices.

The largest relative difference we find in our pricing examples is for the spread option when $\rho=1$ and $K=1.3$, see Table \ref{call_on_basket_rho1} (last two rows, last column). In this case the relative difference between the MVMD price and the SCMD price is about $1.4\%$. However the difference for the corresponding Kendall tau's in MVMD and SCMD, as given in Table \ref{kendall_volatilitycons1} (first row, last two columns), is about $9\%$. Hence we see that to a large relative difference in the dependence structure corresponds a much smaller relative difference in option prices. 

Finally, since the MVMD and SCMD models give similar numerical results in pricing
European Call option on a weighted arithmetic average containing only long
positions and European Call Spread option, the MVMD model is the
most convenient here since it allows to compute the option price in one single Monte-Carlo step which can then be evaluated rapidly.

\bigskip

In the next section, we will price an European Call option on a weighted
{\emph{geometric}} average under the SCMD and MVMD models and
investigate if this option is more sensitive to the different statistical dependence between the two models.

\subsection{Geometric basket option}

The European Call price tested in this paragraph is given by
(\ref{europprice_rcst}) where $B$ is given by Eq.
(\ref{weighted_geometric_average}). Note that, under MVMD, this
option price is given by Eq. (\ref{asian_option_price_n2}) with $N =
2$.

\bigskip

Table \ref{geom_basket_parameters} reports the parameters we use.
The interest rate $r$ is $5\%.$ The time to maturity $(T)$ is one
year. The strike $K$ takes the values $K = 0.7,$ $K = 1$ and $K =
1.3$. In order to obtain the fair price of the options 100,000 Monte
Carlo runs are performed and an Euler scheme with time step $\Delta
t = 1/360$ is applied. The first comparison uses a correlation $\rho
= 0.6$. The results are given in Table \ref{geometric_basket_rho06}.
The second comparison is done for a correlation $\rho = -0.6$. The
results are reported in Table \ref{geometric_basket_rho_minus06}. A
last comparaison uses a correlation $\rho = 1.$ The results are
shown in Table \ref{geometric_basket_rho1}.

\bigskip

\begin{table}[htb]
\begin{center}
\begin{tabular}{|c|c|}
\hline Initial prices ([$S_1(0),S_2(0)$])& [1,1]\\
\hline drift ([$\mu_1$,$\mu_2$]) & [5 \%,5 \%]\\
\hline [$\lambda_1^1$,$\lambda_1^2$] & [0.6,0.4]\\
\hline
[$\lambda_2^1$,$\lambda_2^2$] & [0.7,0.3]\\
\hline
[$\sigma_1^1$,$\sigma_1^2$] & [0.3,0.2]\\
\hline [$\sigma_2^1$,$\sigma_2^2$] & [0.25,0.35]\\
\hline
weights $[w_1,w_2]$ & [1,1]\\
\hline
\end{tabular}
\end{center}
\caption{Basket Option parameters} \label{geom_basket_parameters}
\end{table}

\begin{table}[htb]
\begin{center}
\begin{tabular}{|c|c|}
\hline
\multicolumn {2}{|c|}{$K = 0.7$}\\
\hline
 &  Option price\\
\hline
MVMD & 0.3313 (0.00074) \\
SCMD & 0.3312 (0.00075) \\
\hline \hline
\multicolumn {2}{|c|}{$K = 1$}\\
\hline
MVMD & 0.1154 (0.00055) \\
SCMD & 0.1159 (0.00057) \\
\hline \hline
\multicolumn {2}{|c|}{$K = 1.3$}\\
\hline
MVMD & 0.0267 (0.00028) \\
SCMD & 0.0268 (0.00029) \\
\hline
\end{tabular}
\end{center}
\caption{European Call on Basket Prices and
Standard Errors (in parentheses) for $\rho = 0.6$} \label{geometric_basket_rho06}
\end{table}
\begin{table}[htb]
\begin{center}
\begin{tabular}{|c|c|}
\hline
\multicolumn {2}{|c|}{$K = 0.7$}\\
\hline
 &  Option price\\
\hline
MVMD & 0.3049 (0.00037) \\
SCMD & 0.3045 (0.00037) \\
\hline \hline
\multicolumn {2}{|c|}{$K = 1$}\\
\hline
MVMD & 0.0584 (0.00025) \\
SCMD & 0.0574 (0.00025) \\
\hline \hline
\multicolumn {2}{|c|}{$K = 1.3$}\\
\hline
MVMD & 0.0016 (0.00003) \\
SCMD & 0.0013 (0.00003) \\
\hline
\end{tabular}
\end{center}
\caption{European Call on Basket Prices and
Standard Errors (in parentheses) for $\rho = - 0.6$}
\label{geometric_basket_rho_minus06}
\end{table}
\begin{table}[htb]
\begin{center}
\begin{tabular}{|c|c|}
\hline
\multicolumn {2}{|c|}{$K = 0.7$}\\
\hline
 &  Option price\\
\hline
MVMD & 0.3387 (0.00083) \\
SCMD & 0.3413 (0.00084) \\
\hline \hline
\multicolumn {2}{|c|}{$K = 1$}\\
\hline
MVMD & 0.1308 (0.00063) \\
SCMD & 0.1307 (0.00064) \\
\hline \hline
\multicolumn {2}{|c|}{$K = 1.3$}\\
\hline
MVMD & 0.0367 (0.00035) \\
SCMD & 0.0376 (0.00038) \\
\hline
\end{tabular}
\end{center}
\caption{European Call on Basket Prices and
Standard Errors (in parentheses) for $\rho = 1$} \label{geometric_basket_rho1}
\end{table}

While Kendall's tau is different between the SCMD and MVMD models especially when $\rho$ is high, as we have seen eaerlier, the option price is not as sensitive. Table \ref{geometric_basket_rho1} shows that the prices obtained by the two models are close. Tables \ref{geometric_basket_rho06} and \ref{geometric_basket_rho_minus06} show that this feature is maintained for a correlation $\rho = \pm 0.6$. We see that the prices obtained by the two models are close to (but less than, see \cite{Vorst}) those obtained previously in Section  \ref{subsect_arithmetic_average} when dealing with an option on a weighted arithmetic average of a basket with long positions. All the experiments show that the price of the option increases as the correlation between the assets increases for all strikes. It can be seen that increasing the strike decreases dramatically the prices in the two models for the different values of correlation.

\bigskip

These results seem to suggest that an option on a weighted geometric average of a basket is not very sensitive to dependence between the different assets since even models that give different Kendall's tau give quite similar prices. This is basically the same result we obtained for the arithmetic average basket in the previous section.

Because the above observations show that the MVMD and SCMD models give similar numerical results in European Call option on a weighted geometric average of a basket pricing, it is better to use the MVMD model allowing to compute the option price in closed form.

\section{Conclusions and perspectives}

We illustrated how to extend in a conceptually simple fashion an
asset price model, the so--called (univariate and possibly shifted) lognormal mixture dynamics, that has been shown to reproduce well general implied
volatility structures commonly observed on the market
\cite{mixture1,mixture2,mixtureFX,sartorelli,fengler,musiela}. This model is formulated in the space of the so--called local volatility models. The extension aims at inferring an analytic expression for the local volatility of a multivariate
security (such as e.g. a basket of underlying assets) that is consistent with
{\it (i)} the individual dynamics of each component of the security
as deduced by that security volatility smile and {\it (ii)} a given instantaneous correlation structure between different securities.

A na\"{i}ve approach would consist in connecting univariate lognormal mixure dynamics models for each asset through an instantaneous correlation connecting the Brownian motions driving different asset dynamics. We refer to this approach as simply correlated mixture dynamics, SCMD.

However, we improve this approach by extending the mixture dynamics to the multivariate case in a more radical way, leading to the multi-variate mixture dynamics, MVMD, implying a multivariate mixture rather than single univariate mixtures patched together by Brownian correlations. While this is perfectly equivalent to SCMD for single assets, the main practical advantage of our MVMD extension is that our approach allows for a semi-analytic pricing of European style derivatives on the multivariate security in a way that takes into account the smile structures of the individual component securities and reduces
computational time, while staying arbitrage--free. Another important advantage is the availability of closed--form dependence measures, that are important in a multi-asset setting.  This points to MVMD  being an arbitrage-free dynamical model with a great potential for consistently modelling single assets' and baskets' (or indices') volatility smiles.

We further introduced Markovian projection results showing how our model is related to multivariate uncertain volatility models and also illustrating how the Markovian projection for a geometric basket dynamics is consistent with a univariate mixture dynamics model. 

In the paper we also showed that our approach performs remarkably well in terms of basket option pricing with a smile structure of implied volatilities, and provided a number of numerical examples. 
\bigskip

Future extensions
include the testing of this approach in actual situations as swap
rates derivatives within the LIBOR Market Model. Such an extension
would allow computing in a quasi--analytical fashion the swap rates
smile given the smiles in the individual caplets and an instantaneous correlation assumption. We may also apply this setup to triangular relationships among exchange rates in the FX market. An interesting application would be to apply the framework in this paper to a real equity index smile, trying to connect said smile with the index component single smiles. More generally, we could study other payouts whose valuation depends crucially on dependence assumptions, such as best-of baskets and similar products.

%Here we have introduced "The Multivariate Uncertain Volatility
%Model", future works consists in implementing it and investigate
%whether it is interesting to use it instead of our model to price
%multiasset European options.


\begin{thebibliography}{99}

%%%%% A%%%%

\bibitem{Avellaneda} M.Avellaneda, C.Friedman, R.Holmes, D.Samperi,
{\it Calibrating Volatility Surfaces via Relative--entropy
Minimization}, Applied Mathematical Finance, 4(1) (1997), pp. 37-64
%%%%B%%%%

\bibitem{obn07} Ole Eiler Barndorff-Nielsen and Robert Stelzer (2007). Positive-definite matrix processes of finite
variation. Probability and Mathematical Statistics, 27:3--43, 2007.


\bibitem{bhupinder} B.Bhupinder, {\it Implied Risk--Neutral
Probability Density Functions from Option Prices: A Central Bank
Perspective}, In Forecasting Volatility in the Financial Markets.
Edited by J. Knight, S. Satchell. Butterworth Heinemann, Oxford
(1998), pp. 137-167

\bibitem{Breeden} D.T.Breeden, R.H.Litzenberger, {\it Prices of
State--Contingent Claims Implicit in Option Prices}, Journal of
Business, 51(4) (1978), pp.621-651

\bibitem{black_scholes} F. Black and M. Scholes, {\it The Pricing of Options and
Corporate Liabilities}, Journal of Political Economy, 81(3) (1973),
pp. 637--659

\bibitem{borovkova_permana_weide} S. Borovkova, F.J. Permana, H.v.d. Weide, {\it A Closed Form Approach
to the Valuation and Hedging of Basket and Spread Options}, The Journal of Derivatives, 14(4) (2007), pp. 8-24


\bibitem{general_mixture_diffusion} D.Brigo, {\it The general mixture--diffusion SDE and its relationship with an uncertain--volatility option model with volatility--asset decorrelation} (2002), Available at SSRN.com and 
arXiv.org

\bibitem{brigo_masetti} Brigo D, and Masetti M., {\it Risk Neutral Pricing of Counterparty Risk}, in:  Pykhtin M. (Editor), Counterparty Credit Risk Modeling: Risk Management, Pricing and Regulation, Risk Books, London (2005). 

%\bibitem{analyticalmodelsforvolsmilesskews} D.Brigo, F.Mercurio, {\it Fitting volatility skews and smiles with analytical stock-price models}, Banca IMI internal report (2000)

\bibitem{mixture1} D.Brigo, F.Mercurio, {\it A mixed--up smile},
Risk, 13(9), September (2000), pp. 123-126

\bibitem{mixture2} D.Brigo, F.Mercurio, {\it Displaced and Mixture Diffusions for
Analytically-Tractable Smile Models}, Mathematical Finance -
Bachelier Congress 2000, Geman, H., Madan, D.B., Pliska, S.R.,
Vorst, A.C.F., eds. {\em Springer Finance}, Springer, Berlin (2001),
pp. 151-174

\bibitem{mixtureFX} D.Brigo, F.Mercurio, {\it Lognormal--mixture dynamics and calibration to market volatility smiles},
International Journal of Theoretical and Applied Finance, 5(4)
(2002), pp. 427-446

%\bibitem{smile_incertainty} D.Brigo, F.Mercurio, F.Rapisarda, {\it Smile at Uncertainty}, Risk (2004)


%\bibitem{working_paper} D. Brigo, F. Rapisarda, A. Sridi, {\it Measures of dependence in the arbitrage-free  Multivariate Mixture Dynamics
%Model}, Working paper.

\bibitem{sartorelli} D. Brigo, F. Mercurio, G. Sartorelli (2003).
{\it Alternative asset-price dynamics and volatility smile}, Quantitative Finance, Vol: 3:173--183


\bibitem{BrittenJones} M.Britten--Jones, A.Neuberger, {\it Option
Prices, Implied Price Processes and Stochastic Volatility}, The
Journal of Finance, 55(2) (2000), pp. 839-866


\bibitem{BrownRandall} G.Brown, C.Randall, {\it If the Skew Fits},
Risk, April (1999), pp. 62-65

\bibitem{bru91} Marie-France Bru (1991), {\it Wishart processes}, Journal of Theoretical Probability, 4:725--751,
1991.



%%%%%%%C%%%%%%%%

%\bibitem{Carr} P. Carr, M. Tari, T. Zariphopoulou, {\it Closed Form
%Option Valuation with Smiles}, NationsBank Montgomery
%Securities technical report (1999).
%
%http://www.ma.utexas.edu/users/zariphop/pdfs/TZ-TechnicalReport-11.pdf

\bibitem{cox} J.C. Cox, {\it Notes on Option pricing I: Constant elasticiy of Variance Diffusions}, Working Paper, Stanford University (1975)

\bibitem{cox_ross} J.Cox, S.Ross, {\it The Valuation of Options for
Alternative Stochastic Processes}, Journal of Financial Economics, 3
(1976), pp. 145-166
%%%%%%%%%D%%%%%%%

\bibitem{fonseca07} J. Da Fonseca, M. Grasselli, and C. Tebaldi (2007), {\it Option pricing when correlations are stochastic:
an analytical framework}, Review of Derivatives Research, 10:151--180.


\bibitem{Derman} E.Derman, I.Kani, {\it Riding on a Smile}, Risk,
7(2) (1994), pp. 32-39

\bibitem{DermanKani} E.Derman, I.Kani, {\it Stochastic Implied Trees:
Arbitrage Pricing with Stochastic Term and Strike Structure of
Volatility}, International Journal of Theoretical and Applied
Finance, 1(1) (1998), pp. 61-110

\bibitem{Dupire} B. Dupire (1994), {\it Pricing with a Smile}, Risk 7 (1), 18--20.


\bibitem{Dupire1} B.Dupire, {\it Pricing and Hedging with Smiles},
Mathematics of Derivative Securities, Cambridge University Press,
M.A.H.Dempster, S.R.Pliska eds. (1997)

%%%%%%%E%%%%%%%%

%%%%%%%F%%%%%%%%%%
\bibitem{Fechner} G.T. Fechner, {\it Kollektivmasslehre} (G.F. Lipps, Ed.), Leipzig (1897)

\bibitem{fengler} M.R. Fengler (2005), {\it Semiparametric modeling of implied volatility}, Springer Finance

\bibitem{FJ09} M. Forde  and A. Jacquier (2011), {\it The large-maturity smile for the Heston model},
Finance \& Stochastics, {\tt 15} (4): 755-780.

\bibitem{gatheral} Gatheral, J. (2006), {\it The Volatility Surface. A practitioner's guide}, Wiley.

\bibitem{GaEtAl} J.~Gatheral, E.P.~Hsu, P.~Laurence, C.~Ouyang and T.H.~Wang (2011),
{\it Asymptotics of implied volatility in local volatility models}, Mathematical Finance, {\tt 22} (4): 591-620.


%%%%%%%%G%%%%%
%

\bibitem{gurieroux07} C. Gourieroux (2007), {\it Continuous time Wishart process for stochastic risk}, Econometric Reviews,
25:2:177--217.


\bibitem{gyongi} I. Gy\"{o}ngy, {\it Mimicking the one-dimensional marginal distributions of processes having an It\^{o}
differential}, Probability Theory and Related Fields, 71(4)
(1986),pp. 501-516

%%%%%%H%%%%%%
\bibitem{harrison&kreps}J.M Harrison, D.M.Kreps, {\it Martingales and
arbitrage in multiperiod securities markets}, Journal of Economic
Theory, 20(3), (1979),pp. 381-408

\bibitem{harrison&pliska}J.M Harrison, S.R. Pliska, {\it Martingales and
Stochastic Integrals in the Theory of Continuous Trading},
Stochastic Processes and their Applications, 11(3) (1981),pp.
215-260

\bibitem{labordere} Henry-Labord\`ere, P. (2009), {\it Calibration of local stochastic volatility models to market smiles},  Risk Magazine
(September), 112--117.


\bibitem{Heston} S.Heston, {\it A Closed Form Solution for Options with
Stochastic Volatility with Applications to Bond and Currency
Options}, The Review of Financial Studies, 6(2) (1993), pp. 327-343

\bibitem{HullWhite} J.Hull, A.White, {\it The Pricing of options on
Assets with Stochastic Volatilities}, The Journal of Finance, 42(2)
(1987), pp. 281-300
\bibitem{labordere_calibration} P. H. Labord\`{e}re, {\it Calibration of local stochastic volatility models to market smiles : A Monte Carlo
Approach}, Risk September (2009)
%%%%%%%%I%%%%%%%%
%%%%%%%%%J%%%%%%%%%%
\bibitem{Jackwerth} J.C.Jackwerth, M.Rubinstein, {\it Recovering
Probability Distributions from Option Prices}, The Journal of
Finance, 51(5) (1996), pp. 1611-1631

%%%%%%K%%%%%%%%
\bibitem{Karatzas} I.Karatzas, S.E.Shreve, {\it Brownian Motion and
Stochastic Calculus}, Springer (1997)

\bibitem{Vorst} A.G.D. Kemna, A.C.F. Vorst, {\it A price method for options based on average asset values}, Journal of Banking and Finance, 14 (1990), pp. 113--129

%\bibitem{Kendall} M.G. Kendall, {\it A new measure of rank correlation}, Biometrika, 30(1/2) (1938), pp. 81-93

%\bibitem{kendall_stuart} M.G. Kendall, A. Stuart, {\it Handbook of Statistics}, Griffin \& Company, London (1979)

\bibitem{kloeden_platen} P. E. Kloeden and Eckhard Platen, {\it Numerical Solution of Stochastic Differential
Equations}, Springer-Verlag, (1995)

%\bibitem{kruskal} W. Kruskal, {\it Ordinal measures of association}, Journal of the American Statistical Association, 53 (1958), pp. 814-861

%%%%%%%%L%%%%%%

%\bibitem{Lagnado} R.Lagnado, G.Delianedis, S.Tikhonov, {\it Monte
%Carlo Simulation of Non--Normal Processes}, MKIRisk discussion paper

\bibitem{lamberton_lapeyre} D.Lamberton, B.Lapeyre, {\it Introduction to Stochastic Calculus Applied to Finance}, Chapman \& Hall, London (1996)

\bibitem{liptonbook} A.Lipton, {\it Mathematical Methods For Foreign Exchange: A Financial Engineer's Approach}, World Scientific  (2001).

%%%%%%%%%M%%%%%%%
\bibitem{margrabe} W.Margrabe, {\it The value of an option to exchange one asset for another}, The Journal
of Finance, 33(1) (1978), pp. 177-186

\bibitem{musiela} Musiela, M., and Rutkowski, M., {\it Martingale Methods in Financial Modelling}, 2nd Edition, Springer Verlag (2004).


%%%%%%%%%%%N%%%%%%%%%%%
\bibitem{nelsen} R.B.Nelsen, {\it An Introduction to Copulas},
Springer Series in Statistics, Second Edition (2006)
%%%%%%%%%%%P%%%%%%%%
\bibitem{Prigent}J.L.Prigent, O.Renault, O.Scaillet, {\it An
Autoregressive Conditional Binomial Option Pricing Model}, Working
Papers, Centre de Recherche en Economie et Statistique (2002)

\bibitem{rubi83} Rubinstein, M., {\it Displaced Diffusion Option Pricing}, J. of Finance, vol 38 (1): 213--217  (1983)

%%%%%%%R%%%%%%%

%%%%%%%%%S%%%%%%

%%%%%%%%T%%%%%%%%%
\bibitem{Tompkins} R.G.Tompkins, {\it Stock Index Futures Markets:
Stochastic Volatility Models and Smiles}, The Journal of Futures
Markets, 21(1) (2001), pp. 43-78
%\bibitem{Tompkins2} R.G.Tompkins, {\it Fixed income Markets: Stochastic Volatility
%Models and Smiles}, preprint, Dept. Finance, Vienna University of
%Technology (2000)
%%%%%%%%V%%%%%%%%%%
\bibitem{Vasicek} O.A.Vasi\v{c}ek, {\it A Series Expansion for the
Bivariate Normal Integral}, The Journal of Computational Finance, 1
(1998), pp. 5-10

%%%%%%%%%%%%%%%%%%
%\bibitem{financialnumericalrecipes} B A. {\o}degaard, {\it Financial Numerical Recipes in C++} (2007), dowloadable at the URL
%http://finance.bi.no/\~\ bernt/gcc\_prog/recipes/recipes.pdf


\end{thebibliography}
\end{document}